\newtheorem{theorem}{Theorem}
\newtheorem{lemma}[theorem]{Lemma}
\newtheorem{property}[theorem]{Property}
\newtheorem{corollary}{Corollary}[theorem]
\begin{document}

\begin{frontmatter}

\title{Structural and Spectral Properties of Strictly Interval Graphs}

\author{Claudia M. Justel$^{1}$} \author{Lilian Markenzon$^{2}$}

\address{$^{1}$ PGSC - Departamento de Engenharia de Computa\c c\~ao, Instituto Militar de Engenharia, Rio de Janeiro, Brazil.
}
\address{$^{2}$ PPGI - Universidade Federal do Rio de Janeiro, Rio de Janeiro, Brazil.
}

\begin{abstract}
{In this paper we deal with a subclass of chordal graphs, which are simultaneously strictly chordal and interval, the \textit{strictly interval graphs}.
We present a new characterization of the class that
leads to a simple linear recognition algorithm.
Next we  introduce a new subclass of strictly interval graphs, the $SI$-core graphs, 
that are  non-split and non-cograph graphs and show that several elements of the new class are Laplacian integral.}
\end{abstract}

\begin{keyword}
strictly interval graphs \sep Laplacian eigenvalues.

\MSC 05C50 \sep 05C75

\end{keyword}

\end{frontmatter}


\section{Introduction}\label{intro}

Strictly chordal graphs (also called block duplicate graphs), a subclass of chordal graphs, were first introduced by Golumbic and Peled \cite{GP02}; in that paper a forbidden subgraph characterization was given by the authors. 
The same class  was defined in terms of hypergraphs by Kennedy \cite{K05}. 
In 2010, Brandst\"adt and Wagner \cite{BW10} characterized strictly chordal graphs by their critical clique graph structure.    
That characterization led to several structural properties of the class.
In 2015, a characterization based on minimal
vertex separators was presented by Markenzon and Waga \cite{MW15}.
Here we deal with a subclass of strictly chordal graphs 
which are also interval graphs,
the \textit{strictly interval graphs}.
They were introduced by Markenzon and Waga \cite{MW16},
who proved that they are (2-net, bipartite claw)-free (see Figure \ref{fig:forbidden}).
We present a new characterization of  strictly interval graphs using their critical clique graphs 
leading to a simpler recognition algorithm.
We  also introduce a subclass of strictly interval graphs, the  $SI$-core graphs.

We phocus, particularly, on the relation between classic invariants of graphs and 
their integer Laplacian eigenvalues.
The quest for classes of graphs with elements
which are Laplacian integral graphs are well known in the litterature, being 
the cographs the most important example of these classes.
Based on the structure of the $SI$-core graphs we show that several 
elements of the class are  Laplacian integral.

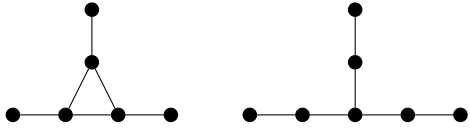
\begin{figure}[h]
\begin{center}
\begin{tikzpicture} [scale=.35,auto=left]
 \tikzstyle{every node}=[circle, draw, fill=black,
                        inner sep=1.8pt, minimum width=4pt]

  \node (a) at (2,2) {};
  \node (b) at (4,2) {};
  \node (c) at (6,2)  {};
  \node (d) at (8,2)  {};
  \node (e) at (5,4) {};
  \node (f) at (5,6) {};

\foreach \from/\to in {a/b,b/c,c/d,b/e,c/e,f/e}
    \draw (\from) -- (\to);

 \node (g) at (11,2) {};
  \node (h) at (13,2) {};
  \node (i) at (15,2)  {};
  \node (j) at (17,2)  {};
  \node (k) at (19,2) {};
\node (m) at (15,4) {};
\node (n) at (15,6) {};

\foreach \from/\to in {g/h,h/i,i/j,j/k,i/m,m/n}
    \draw (\from) -- (\to);
     \end{tikzpicture}
\caption{2-net and bipartite claw graphs}
  \label{fig:forbidden}
\end{center}
\end{figure}


\section{Graph Basic Concepts}

All graphs mentionned in this paper are connected graphs.

Basic concepts about chordal graphs are assumed to be known and
can be found in Blair and Peyton \cite{BP93} and Golumbic \cite{Go04}.
Let $G = (V, E)$ be a graph, with $|E|=m$,  $|V| = n > 0$.
The \textit{set of neighbors\/} of a vertex $v \in V$ is denoted by
$N(v) = \{ w \in V \mid \{v,w\} \in E\}$ and its \textit{closed neighborhood} by  $N[v] = N(v)\cup \{v\} $.
Two vertices $u$ and $v$   are  \textit{true twins} in $G$  if $N[u] = N [v]$ and 
they are \textit{false twins}  in $G$  if $N (u) = N(v)$.
A vertex $v$ is said to be \textit{simplicial\/} in $G$ if $N(v)$  is a
clique in $G$.
Given a chordal graph $G$, the \textit{derived graph} $D(G)$ 
is the induced subgraph obtained by deleting all 
the simplicial vertices of $G$.

A subset $S \subset V$  is a \textit{vertex separator}  for
non-adjacent vertices $u$  and $v$  (a $uv$-separator) if the
removal of $S$ from the graph separates $u$ and $v$  into distinct
connected components. If no proper subset of $S$  is a
$uv$-separator then $S$ is a \textit{minimal $uv$-separator}. When the
pair of vertices remains unspecified, we refer to $S$  as a \textit{minimal vertex separator}.

A \textit{clique-tree} of $G$ is defined as a tree $T$  whose vertices 
are the maximal cliques of $G$ such that for every  two maximal cliques $Q$ and $Q^\prime$
each clique in the path from $Q$ to $Q^\prime$ in $T$ contains $Q\cap Q^\prime$. 
The set of maximal cliques of $G$ is denoted by $\mathbb{Q}$.
The \textit{leafage} of a chordal graph $G$, $\ell(G)$,  is the minimum number of leaves among
the clique-trees of $G$.
Blair and Peyton \cite{BP93}   proved that, 
for a clique-tree $T=(V_T, E_T)$,
a set $S\subset V$ is a minimal vertex separator of $G$ if
and only if $S= Q' \cap Q''$ for some edge $\{Q', Q''\}\in E_T$.
Moreover, the multiset  ${\mathbb M}$ of
the minimal vertex separators of $G$ is the same for every
clique-tree of $G$.
The algorithm presented in Markenzon and Pereira \cite{MP10} computes  
the set of minimal vertex separators, $\mathbb S$, of a chordal graph $G$ in linear time.

It is worth mentioning special kinds of cliques.
A \textit{simplicial clique} is a maximal clique containing at least one simplicial vertex.
A simplicial vertex appears in only one maximal clique of the graph.
A simplicial clique $Q$ is called a \textit{boundary clique} if there exists a maximal clique $Q^\prime$
such that  $Q \cap Q^\prime=X $ is the set of non-simplicial vertices of $Q$. 
Shibata \cite{S88} showed that if a maximal clique $Q$ is a leaf in some clique-tree 
then $Q$ is a boundary clique.
The converse of this fact was proved by
Hara and Takemura \cite{HT06}.
These results can be summarized in the next theorem.

\begin{theorem}\label{theo:leaf-boundary} {\rm\cite{HT06}}
A maximal clique is a leaf in some clique-tree if and only if it is a boundary clique.
\end{theorem}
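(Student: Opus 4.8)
The plan is to prove both directions of the biconditional in Theorem~\ref{theo:leaf-boundary}, each of which has a clear structural interpretation in terms of clique-trees. The forward direction attributes to Shibata, the converse to Hara and Takemura; since the statement bundles both, I would present a self-contained argument for each using the clique-tree characterization of minimal vertex separators recalled above.

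For the forward direction, suppose $Q$ is a leaf in some clique-tree $T=(V_T,E_T)$. Let $\{Q,Q'\}$ be the unique edge of $T$ incident to the leaf $Q$. By the Blair--Peyton result quoted in the excerpt, the intersection $S=Q\cap Q'$ is a minimal vertex separator of $G$. The plan is to show that the vertices in $Q\setminus S$ are exactly the simplicial vertices of $Q$ and that they appear in no other maximal clique. Concretely, for any vertex $v\in Q\setminus S$, I would argue that $Q$ is the only maximal clique containing $v$: if some other maximal clique $Q''$ also contained $v$, then by the clique-tree defining property every clique on the $Q$-to-$Q''$ path must contain $Q\cap Q''\ni v$, and since $Q$ is a leaf this path starts by passing through $Q'$, forcing $v\in Q'$ and hence $v\in S$, a contradiction. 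Thus every $v\in Q\setminus S$ lies in $Q$ alone, so $N[v]=Q$ and $v$ is simplicial; this makes $Q$ a simplicial clique with $X=S$ its set of non-simplicial vertices, witnessing that $Q$ is a boundary clique.

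For the converse, suppose $Q$ is a boundary clique, so there is a maximal clique $Q'$ with $Q\cap Q'=X$ the set of non-simplicial vertices of $Q$. Starting from an arbitrary clique-tree $T$, the goal is to modify it into one in which $Q$ is a leaf. The key step is to reattach $Q$ so that its single neighbor is $Q'$: I would take the tree-edge incident to $Q$ whose separator is largest, or more robustly argue that because the simplicial vertices of $Q$ occur only in $Q$, every minimal separator $Q\cap Q''$ associated with an edge at $Q$ is contained in $X\subseteq Q'$, so detaching $Q$ from its current neighbors and joining it only to $Q'$ preserves the clique-tree property. One must verify that the resulting tree is still a clique-tree, i.e.\ that the induced-subtree (running-intersection) condition still holds for every vertex of $G$; this follows because no simplicial vertex of $Q$ appears elsewhere, and every non-simplicial vertex of $Q$ lies in $X\subseteq Q'$, so the subtree of cliques containing any given vertex remains connected after the reattachment.

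The main obstacle I anticipate is the converse direction, specifically the verification that the reattachment yields a legitimate clique-tree rather than merely a tree on the maximal cliques. The delicate point is confirming the running-intersection property globally after moving $Q$ to hang off $Q'$: one must ensure that no vertex's supporting subtree becomes disconnected, which requires carefully using that $Q\cap Q''\subseteq X\subseteq Q'$ for every other clique $Q''$. The forward direction, by contrast, is a fairly direct consequence of the clique-tree defining property and should require little more than the adjacency argument sketched above.
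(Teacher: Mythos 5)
The paper does not prove Theorem~\ref{theo:leaf-boundary} at all: it is quoted as a known result, with the forward direction attributed to \cite{S88} and the converse to \cite{HT06}. So your attempt can only be judged against the standard arguments. Your forward direction is essentially correct and complete: the path argument showing that every $v\in Q\setminus S$ lies in no other maximal clique is exactly right. Two small points should be made explicit: $Q\setminus S\neq\emptyset$ (since $Q\not\subseteq Q'$ by maximality of $Q$), and $X=S$ exactly, which needs the converse inclusion that no vertex of $S=Q\cap Q'$ is simplicial --- immediate from the fact, stated in the paper, that a simplicial vertex belongs to a unique maximal clique, while each vertex of $S$ lies in both $Q$ and $Q'$.

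The converse direction, however, has a genuine gap in the construction, not just in the verification you flagged. ``Detaching $Q$ from its current neighbors and joining it only to $Q'$'' does not yield a tree when $Q$ has degree $k\geq 2$ in $T$: deleting the $k$ edges at $Q$ leaves components $C_1,\dots,C_k$, one containing each former neighbor $R_i$, and adding the single edge $\{Q,Q'\}$ reconnects $Q$ only to the component containing $Q'$; the remaining components are orphaned. The correct surgery is to replace each edge $\{Q,R_i\}$ by $\{R_i,Q'\}$ for every $R_i$ not in the component of $Q'$, and additionally add $\{Q,Q'\}$; this restores a spanning tree in which $Q$ is a leaf. Your key inclusion $Q\cap Q''\subseteq X\subseteq Q'$ (valid, since simplicial vertices of $Q$ occur in no other maximal clique) is then exactly what makes the running-intersection check succeed for the rehung pieces: any vertex $v$ whose old subtree passed through $Q$ into $C_i$ satisfies $v\in Q\cap R_i\subseteq Q'$, so in the new tree its cliques remain connected through the hub $Q'$, while a vertex whose subtree avoided $Q$ lies entirely inside one $C_i$, which is untouched. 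Alternatively, the bookkeeping can be avoided via the maximum-weight spanning tree characterization of clique-trees: since $|Q\cap R_i|\leq|Q\cap Q'|$ and $|R_i\cap Q|\leq|R_i\cap Q'|$, each exchange swap preserves maximum weight, hence yields a clique-tree at every step.
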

A maximal clique such that all vertices are true twins is called 
a \textit{critical clique} \cite{LKJ02}. 
The set $\mathbb{C}$ of critical cliques of $G$  forms a partition of vertices of $V$.

The \textit{critical clique graph}\/  of a chordal graph $G$, $CC(G)$,  is the
connected  graph whose vertices are the critical cliques of $G$ and whose
edges connect vertices corresponding to  cliques $Q_i$ and $Q_j$ such that
some vertex of $Q_i$ is  neighboor in $G$ of a vertex in $Q_j$.
Note that $CC(G)$ has no true twins \cite{BW10}, then the maximal cliques of $CC(G)$ have at most one
simplicial vertex.


\section{Subclasses of chordal graphs}

In this section we present the definitions and some characterizations of the subclasses of chordal graphs 
that somehow will be used in this paper.

An \textit{interval graph} is the intersection graph of a family $I$ of intervals on the real line. 
There are several characterizations of this class. 
The following characterization theorem guarantees that for an interval graph,
 its maximal cliques can be linearly ordered.

\begin{theorem}\label{theo:charact-IG2}  {\rm\cite{GH64}}
Let  $G=(V,E)$ be a non-complete chordal  graph. Then, 
$G$ is an interval graph if and only if $G$ has a  clique-tree that
is a  path. 
\end{theorem}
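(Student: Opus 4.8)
The plan is to reduce both implications to a single combinatorial fact and then treat each direction separately. The backbone is the following observation, which I would establish first directly from the defining junction property of a clique-tree: a linear arrangement $Q_1, Q_2, \ldots, Q_k$ of the maximal cliques of $G$, regarded as a path $Q_1 - Q_2 - \cdots - Q_k$, is a clique-tree \emph{if and only if} every vertex $v$ belongs to a block of consecutive cliques $Q_i, Q_{i+1}, \ldots, Q_j$. Indeed, if the path is a clique-tree and $v \in Q_i \cap Q_j$ with $i < j$, then the unique path between $Q_i$ and $Q_j$ is $Q_i, Q_{i+1}, \ldots, Q_j$, and each of these must contain $Q_i \cap Q_j \ni v$; hence the indices of the cliques containing $v$ form a consecutive block. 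Conversely, if every vertex occupies a consecutive block, then for any $Q_i, Q_j$ each $v \in Q_i \cap Q_j$ lies in all intermediate cliques, which is precisely the junction property. Proving this equivalence cleanly is the conceptual heart of the argument.

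For the sufficiency direction ($\Leftarrow$), suppose $G$ has a clique-tree that is a path $Q_1 - \cdots - Q_k$. By the observation each vertex $v$ occupies a consecutive block, so I would assign to $v$ the interval $I_v = [a_v, b_v]$ with $a_v = \min\{i : v \in Q_i\}$ and $b_v = \max\{i : v \in Q_i\}$; consecutiveness guarantees that $v \in Q_\ell$ exactly when $a_v \le \ell \le b_v$. It then remains to verify that $u$ and $v$ are adjacent in $G$ if and only if $I_u \cap I_v \ne \emptyset$. If they are adjacent they lie in a common maximal clique $Q_\ell$, so $\ell \in I_u \cap I_v$; conversely, if the intervals meet, then some integer $\ell$ lies in both, whence $u, v \in Q_\ell$, and since $Q_\ell$ is a clique they are adjacent. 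Thus $\{I_v\}_{v \in V}$ is an interval representation of $G$.

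For the necessity direction ($\Rightarrow$), suppose $G$ is the intersection graph of a family $\{J_v\}_{v \in V}$ of real intervals. Because pairwise-intersecting intervals share a common point (the Helly property in dimension one), every maximal clique equals the set of vertices whose intervals contain some common point. I would sweep a point from left to right across the representation, recording the active set just before each right-endpoint event; this yields points $p_1 < p_2 < \cdots < p_k$ such that the distinct sets $Q_i = \{v : p_i \in J_v\}$ are exactly the maximal cliques of $G$. Since each $J_v$ is convex, the indices $i$ with $p_i \in J_v$ form a consecutive block, so the observation certifies that the path $Q_1 - \cdots - Q_k$ is a clique-tree.

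The step I expect to require the most care is this last direction, specifically the claim that the sweep enumerates each maximal clique exactly once and in a consistent increasing order. Perturbing endpoints into general position, discarding active sets that fail to be inclusion-maximal, and checking that no maximal clique is missed are routine but must be handled carefully; once the points $p_i$ are correctly extracted, convexity of the $J_v$ makes the consecutiveness immediate. Finally, the hypothesis that $G$ is non-complete is merely a normalization ensuring $k \ge 2$, so that the clique-tree is a genuine (non-degenerate) path; a complete graph is trivially an interval graph, realized by a family of mutually overlapping intervals.
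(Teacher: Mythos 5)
The paper does not prove this statement at all: it is imported verbatim from Gilmore and Hoffman \cite{GH64} as a known characterization, so there is no in-paper argument to compare against. Your proof is correct and is, in essence, the classical one: your ``backbone observation'' that a linear arrangement $Q_1,\dots,Q_k$ is a clique-tree precisely when each vertex occupies a consecutive block of cliques is the Fulkerson--Gross consecutive-arrangement characterization of interval graphs, and both of your directions are the standard arguments built on it. A few details check out and deserve explicit mention: in the sufficiency direction, your implicit use of the fact that two overlapping intervals with integer endpoints share an integer point is what makes ``$I_u\cap I_v\neq\emptyset\Rightarrow u,v\in Q_\ell$ for some $\ell$'' valid; in the necessity direction, the sweep terminates because an interval graph is chordal and hence has at most $n$ maximal cliques, and your Helly argument correctly shows each maximal clique $Q$ equals the active set at the point $\min_{v\in Q}(\text{right endpoint of }J_v)$ (a clique containing a maximal clique must equal it), so distinct maximal cliques receive distinct sweep points and no clique is missed. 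Note also that the chordality hypothesis is never really used by you: in $\Leftarrow$ the existence of a clique-tree already forces chordality, and in $\Rightarrow$ interval graphs are automatically chordal; likewise a complete graph has a one-vertex clique-tree, which is degenerately a path, so the non-completeness hypothesis is, as you say, cosmetic. What your route buys over the paper's bare citation is a self-contained, fully elementary proof; the cost is the bookkeeping in the sweep step, which you correctly identified as the only place requiring care and for which your outline (general position, discard non-maximal active sets, verify coverage) is adequate.
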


\textit{Strictly chordal graphs}, defined in \cite{K05} based on hypergraph properties,  
were firstly introduced as \textit{block duplicate graphs}  by  Golumbic and Peled \cite{GP02}.
A \textit{strictly chordal  graph} is a graph obtained by adding zero or more true twins 
to each vertex of a block graph $G$ 
(a graph is a \textit{block graph} \cite{H63}  if it is connected and every block is a clique).
 The  class was proved to be chordal,  gem-free and dart-free (\cite{GP02}, \cite{K05}), see Figure \ref{fig:gem-dart}.

\bigskip

\begin{figure}[h]
\begin{center}
\begin{tikzpicture}
 [scale=.43,auto=left]
 \tikzstyle{every node}=[circle, draw, fill=black,
                        inner sep=1.8pt, minimum width=4pt]

  \node (c) at (1,10) {};
  \node (d) at (3,10) {};
  \node (b) at (-0.5,8.5)  {};
  \node (e) at (4.5,8.5)  {};
  \node (a) at (2,7) {};

  \foreach \from/\to in {c/d,c/b,c/a,d/a,d/e,b/a,e/a}
    \draw (\from) -- (\to);


  \node (n) at (14,10) {};
  \node (k) at (11.5,8.5) {};
 \node (l) at (14,8.5)  {};
  \node (m) at (16.5,8.5)  {};
  \node (j) at (14,7)  {};
  \foreach \from/\to in {n/k,n/l, k/l, l/m,k/j, l/j} 
    \draw (\from) -- (\to);

    \end{tikzpicture}
\end{center}
\caption{Gem and dart  graphs.} 
\label{fig:gem-dart}
\end{figure}
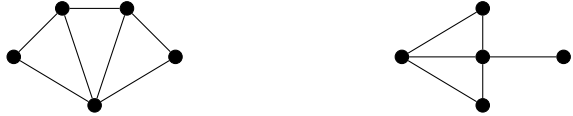

\begin{theorem} {\rm\cite{BW10}}\label{theo:BW}
Let $G$ be a connected graph. 
$G$ is strictly chordal iff $CC(G)$ is a block graph.
\end{theorem}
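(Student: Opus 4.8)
The plan is to exploit a single structural correspondence between $G$ and its critical clique graph and then read off both implications from it. The key observation I would establish first is a \emph{complete-join property}: if $C$ and $C'$ are distinct critical cliques and some vertex of $C$ is adjacent to some vertex of $C'$, then \emph{every} vertex of $C$ is adjacent to \emph{every} vertex of $C'$. This is immediate from the definition of true twins, since all vertices of a critical clique share one closed neighborhood. Two consequences follow. First, choosing one representative from each critical clique yields an induced subgraph of $G$ isomorphic to $CC(G)$; thus $CC(G)$ is (a copy of) an induced subgraph of $G$. Second, $G$ is recovered from $CC(G)$ by the reverse operation: replace each vertex of $CC(G)$ by a clique of true twins (its critical clique) and join two such cliques completely exactly when the corresponding vertices of $CC(G)$ are adjacent. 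This reverse operation is precisely the ``add zero or more true twins'' construction in the definition of a strictly chordal graph.

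Given this correspondence, the reverse implication is immediate. If $CC(G)$ is a block graph, then $G$ is obtained from the block graph $CC(G)$ by adding true twins, which is exactly the definition of a strictly chordal graph; hence $G$ is strictly chordal.

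For the forward implication I would start from the definition: a strictly chordal $G$ is obtained from some block graph $H$ by adding true twins. Writing $V(G)$ as the disjoint union of twin classes $W_v$, one for each $v\in V(H)$, with $W_u$ completely joined to $W_v$ iff $uv\in E(H)$, I would check that two vertices $w\in W_u$ and $w'\in W_v$ with $u\neq v$ are true twins in $G$ if and only if $N_H[u]=N_H[v]$, i.e.\ iff $u$ and $v$ are true twins in $H$. Consequently the critical cliques of $G$ are exactly the unions of the classes $W_v$ taken over each true-twin class of $H$, so that $CC(G)\cong CC(H)$. It then remains to prove that $CC(H)$ is a block graph. Here I would again use the representative embedding: $CC(H)$ is isomorphic to an induced subgraph of $H$, and since $H$ is a block graph it is connected, chordal and diamond-free; as these properties are inherited by induced subgraphs, $CC(H)$ is chordal and diamond-free, and being connected it is therefore a block graph.

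I expect the delicate step to be the forward direction, specifically the fact that the block graph $H$ need not coincide with $CC(G)$: the graph $H$ may itself contain true twins, so one genuinely has to pass from $H$ to $CC(H)$ rather than to $H$ (the diamond, obtained from $P_3$ by duplicating its centre, is the smallest example). The two points that require care are the identification $CC(G)\cong CC(H)$ and the stability of the block-graph property under true-twin reduction. For the latter, should one prefer to avoid the standard characterization of block graphs as the connected chordal graphs with no induced diamond, the alternative is to argue directly that any two true twins of a block graph are simplicial vertices lying in a common block (otherwise three pairwise-intersecting blocks would force a cycle in the block--cut tree), so that merging them leaves every block a clique.
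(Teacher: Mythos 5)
Your proposal is correct, but note that the paper itself contains no proof of this statement to compare against: it is quoted verbatim from Brandst\"adt and Wagner \cite{BW10}, so you have in effect supplied a self-contained proof of a cited result, derived from the paper's own definition of strictly chordal graphs. Your backbone is sound: the complete-join property between adjacent critical cliques shows that $G$ is exactly the blow-up of $CC(G)$ (each vertex replaced by a clique of true twins, complete joins along edges), and since blow-ups are precisely the ``add zero or more true twins'' operation, the reverse implication is immediate. You also correctly identify the one delicate point in the forward direction, namely that the block graph $H$ in the definition need not be twin-free, and your computation $N_G[w]=\bigcup_{x\in N_H[u]}W_x$ for $w\in W_u$ cleanly establishes $CC(G)\cong CC(H)$. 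Two small caveats. First, your main route relies on the characterization of block graphs as the connected chordal diamond-free graphs; this is standard but appears nowhere in the paper, so it should be cited or proved --- in fact your own observation does the work: if true twins $u,v$ in $H$ had two nonadjacent common neighbors $x,y$, then $\{u,v,x,y\}$ would induce a diamond, which lies in a single block that then fails to be a clique, so twins in a block graph are simplicial and contracting a twin class visibly keeps every block a clique. Second, the parenthetical alternative about ``three pairwise-intersecting blocks forcing a cycle in the block--cut tree'' is too vague to stand as written, but it is dispensable given the diamond argument. With the block-graph characterization either cited or replaced by that direct argument, the proof is complete.
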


 Strictly chordal graphs  can also be characterized in terms of the structure of their minimal vertex separators. 

\begin{theorem}\label{theo:caract-BD}{\rm \cite{MW15}}
Let  $G=(V,E)$ be a chordal graph.  
The following statements are equivalent: 
\begin{enumerate}
  \item $G$ is a strictly chordal graph.
  \item For any distinct $S, S^{\prime} \in {\mathbb S}$, $S \cap S^{\prime}= \emptyset$.
  \item $G$ is a $\{\textrm{gem, dart}\}$-free graph. 
\end{enumerate}
\end{theorem}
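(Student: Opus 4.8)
The plan is to establish the equivalence through the cyclic chain $(1)\Rightarrow(2)\Rightarrow(3)\Rightarrow(1)$, using Theorem \ref{theo:BW} together with the Blair--Peyton description of minimal vertex separators for the first implication, an explicit separator construction for the second, and the forbidden-subgraph characterization of \cite{GP02} for the last.

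For $(1)\Rightarrow(2)$ I would start from Theorem \ref{theo:BW}: if $G$ is strictly chordal then $CC(G)$ is a block graph. First I would record the correspondence between the maximal cliques of $G$ and the blocks of $CC(G)$. Since a critical clique lies entirely inside or entirely outside each maximal clique (true twins of a vertex of $Q$ are forced into $Q$ by maximality), each maximal clique $Q$ of $G$ is the union $\bigcup_{c\in\mathcal B}K_c$ of the critical cliques $K_c$ forming a maximal clique $\mathcal B$ of $CC(G)$, i.e.\ a block. Combining this with the fact (Blair and Peyton \cite{BP93}) that every minimal vertex separator has the form $S=Q\cap Q'$ for a clique-tree edge, we get $S=\bigcup_{c\in\mathcal B\cap\mathcal B'}K_c$, where $\mathcal B,\mathcal B'$ are the distinct blocks corresponding to $Q,Q'$. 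Two distinct blocks of a block graph meet in at most one vertex, so $\mathcal B\cap\mathcal B'$ is a single critical clique and every $S\in\mathbb S$ is itself a critical clique. As the critical cliques partition $V$, distinct members of $\mathbb S$ are disjoint, which is exactly $(2)$.

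For $(2)\Rightarrow(3)$ I would argue the contrapositive: assuming $G$ is not $\{\textrm{gem},\textrm{dart}\}$-free, I exhibit two distinct intersecting minimal vertex separators. The tool is elementary: if $u-w-v$ is a path of $G$ with $u,v$ non-adjacent, then $w$ lies in every minimal $uv$-separator, and every minimal $uv$-separator is a member of $\mathbb S$. If $G$ contains an induced gem with apex $a$ and path $b-c-d-e$ (notation as in Figure \ref{fig:gem-dart}), then a minimal $bd$-separator contains $a$ and $c$ (via $b-a-d$ and $b-c-d$), while a minimal $ce$-separator contains $a$ and $d$ (via $c-a-e$ and $c-d-e$); both contain $a$, and they are distinct because $c$ lies in the first but, being an endpoint, not in the second. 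If $G$ contains an induced dart with central edge $\{k,l\}$ and pendant $m$ attached to $l$, then a minimal $nj$-separator contains $k$ and $l$ (via $n-k-j$ and $n-l-j$), while a minimal $km$-separator contains $l$ (via $k-l-m$); both contain $l$ and they are distinct since $k$ lies in the first but not in the second. In either case $(2)$ fails.

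Finally, $(3)\Rightarrow(1)$ is the forbidden-induced-subgraph characterization of block duplicate graphs due to Golumbic and Peled \cite{GP02}: a chordal graph with no induced gem and no induced dart is strictly chordal. This closes the cycle. I expect the main obstacle to be $(2)\Rightarrow(3)$: the subtle point is that the separators read off inside a gem or dart are separators of the induced subgraph, not of $G$, so the argument must lift them to genuine elements of $\mathbb S$. The length-two-path forcing handles membership, and choosing the separated pairs so that a vertex forced into one separator is an endpoint of the other secures the required distinctness; the care lies in checking that these forced memberships persist in $G$ irrespective of the rest of the graph, which holds precisely because each forcing relies on a single short path.
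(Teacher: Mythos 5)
The paper states this theorem without proof --- it is quoted verbatim from \cite{MW14} --- so there is no internal argument to compare against, and your proposal must stand on its own. It does: the cycle $(1)\Rightarrow(2)\Rightarrow(3)\Rightarrow(1)$ is sound. For $(1)\Rightarrow(2)$, the supporting claims check out: since true twins share closed neighborhoods, an edge of $CC(G)$ between two critical cliques lifts to a complete join in $G$, so maximal cliques of $G$ correspond bijectively to maximal cliques of $CC(G)$, which in a block graph are exactly its blocks; combined with the Blair--Peyton description from \cite{BP93}, every $S\in\mathbb S$ equals the single critical clique in the intersection of two blocks, and the partition property of critical cliques gives disjointness. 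One pedantic point there: you should note that $S=Q\cap Q'$ is nonempty (the paper assumes all graphs connected), so $\mathcal B\cap\mathcal B'$ is exactly one vertex of $CC(G)$ rather than at most one; nothing breaks, but the phrase ``is a single critical clique'' needs that remark. Your contrapositive for $(2)\Rightarrow(3)$ is the strongest part of the proposal: the forcing observation --- the middle vertex of a $u$--$w$--$v$ path lies in every minimal $uv$-separator of $G$ itself, not merely of the induced gem or dart --- correctly lifts the separators into $\mathbb S$, and the endpoint trick (the $ce$-separator cannot contain $c$, the $km$-separator cannot contain $k$) secures distinctness; I checked the adjacencies against Figure~\ref{fig:gem-dart} and all the forced length-two paths exist, with the relevant pairs genuinely non-adjacent in $G$ because the subgraphs are induced. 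Finally, delegating $(3)\Rightarrow(1)$ to \cite{GP02} is legitimate: the equivalence of (1) and (3) is precisely the Golumbic--Peled forbidden-subgraph characterization, which the present paper itself records in its introduction, so your original work is concentrated in the two legs involving statement (2) --- which is exactly where the new content of the theorem lies. The division of labor is the right one, and the proof is correct.
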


  \begin{corollary}  \label{prop:sc}  Let $G=(V,E)$ be  a strictly chordal graph and $\mathbb S$ its set of minimal vertex separators.
 \begin{enumerate}
 
 \item[(a)]  $S \in {\mathbb S}$  is a minimal separator of $G$.

  \item[(b)] $S$ belongs to exactly $\mu(S)+1$ maximal cliques of $G$, for every  $S \in {\mathbb S}$.
    
 \item[(c)] boundary cliques of $G$ contain only one mvs. 

  \end{enumerate}
 \end{corollary}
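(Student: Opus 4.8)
The plan is to work throughout inside a clique-tree $T$ of $G$ and to exploit two facts in tandem: the Blair--Peyton correspondence, by which a set is a minimal vertex separator exactly when it is the intersection $Q'\cap Q''$ of two cliques joined by an edge of $T$, and the disjointness established in Theorem \ref{theo:caract-BD}, namely that any two distinct members of $\mathbb S$ are disjoint. Two elementary observations will be used repeatedly. First, a clique-tree has the induced-subtree property: for a clique $S$, the maximal cliques that contain $S$ induce a connected subtree $T_S$ of $T$. Second, a simplicial vertex lies in no minimal vertex separator (if such a vertex $v$ were essential to a minimal $xy$-separator it would need a neighbor in each of the two separated components, forcing two non-adjacent vertices into the clique $N(v)$, a contradiction).

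For (a) I would argue by contradiction: suppose a proper subset $R\subsetneq S$ separates some pair $x,y$. Then $R$ contains a minimal $xy$-separator $S'\in\mathbb S$ with $\emptyset\neq S'\subseteq R\subsetneq S$; in particular $S'\neq S$ yet $S'\cap S=S'\neq\emptyset$, contradicting the disjointness of distinct members of $\mathbb S$. Hence no proper subset of $S$ is a separator, which is exactly the assertion that $S$ is an (inclusion-)minimal separator of $G$.

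For (b), fix $S\in\mathbb S$ and let $Q_1,\dots,Q_k$ be the maximal cliques containing $S$; by the induced-subtree property these span a connected subtree $T_S$ with $k$ nodes and $k-1$ edges. Each edge $\{Q_i,Q_j\}$ of $T_S$ carries the label $Q_i\cap Q_j$, which is a member of $\mathbb S$ by Blair--Peyton and which contains $S$; disjointness then forces $Q_i\cap Q_j=S$. Conversely any edge of $T$ labelled $S$ joins two cliques containing $S$, hence lies in the induced subtree $T_S$. Thus the edges of $T$ whose intersection equals $S$ are precisely the $k-1$ edges of $T_S$, so $\mu(S)=k-1$ and $S$ lies in exactly $\mu(S)+1$ maximal cliques.

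For (c), let $Q$ be a boundary clique with set $X$ of non-simplicial vertices. By Theorem \ref{theo:leaf-boundary}, $Q$ is a leaf of some clique-tree; tracing the unique incident edge and using that simplicial vertices occur in no other clique shows that $X$ equals the label of that edge, so $X\in\mathbb S$ and $Q$ contains at least one minimal vertex separator. For uniqueness, any $S_1\in\mathbb S$ with $S_1\subseteq Q$ avoids the simplicial vertices of $Q$ by the second observation, hence $S_1\subseteq X$; if $S_1\neq X$ then disjointness gives $S_1\cap X=\emptyset$, contradicting $\emptyset\neq S_1\subseteq X$. Therefore $S_1=X$ and $Q$ contains exactly one minimal vertex separator. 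The only delicate point throughout is the reduction in (c) identifying the non-simplicial set $X$ with an actual edge-label of the clique-tree (equivalently, an element of $\mathbb S$); once Theorem \ref{theo:leaf-boundary} places $Q$ at a leaf this is immediate, so the disjointness property carries the real weight of all three parts.
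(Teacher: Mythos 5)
Your proof is correct. Note first that the paper itself offers no proof of Corollary~\ref{prop:sc}: it is stated bare, as an immediate consequence of Theorem~\ref{theo:caract-BD} (and traces back to \cite{MW14}), so there is no authorial argument to compare against line by line; your write-up supplies exactly the derivation the authors evidently intend, namely pairwise disjointness of distinct minimal vertex separators combined with the Blair--Peyton correspondence between elements of $\mathbb{S}$ and clique-tree edge labels. Each part checks out. In (a), any separator $R\subsetneq S$ contains a minimal $xy$-separator $S'\in\mathbb{S}$ with $\emptyset\neq S'\subseteq R$, and $S'\neq S$ yet $S'\cap S\neq\emptyset$ contradicts Theorem~\ref{theo:caract-BD}(2). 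In (b), the subtree $T_S$ argument is sound: every edge of $T_S$ carries a label in $\mathbb{S}$ that contains $S$, so disjointness forces the label to equal $S$, and conversely every edge labelled $S$ lies in $T_S$; hence $\mu(S)=k-1$ independently of the chosen clique-tree, since the multiset $\mathbb{M}$ is tree-invariant. In (c), invoking Theorem~\ref{theo:leaf-boundary} to place the boundary clique $Q$ at a leaf is the right move; the inclusion $X\subseteq Q\cap Q''$ deserves one more explicit line (a non-simplicial vertex of $Q$ lies in a second maximal clique, and the subtree of cliques containing it must pass through the unique tree-neighbour $Q''$ of the leaf $Q$), but since you state the induced-subtree property up front this is a compression, not a gap; together with your correct observation that simplicial vertices belong to no minimal vertex separator, uniqueness of the mvs contained in $Q$ follows at once from disjointness.
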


Based on this characterization, the recognition algorithm for strictly chordal graphs is extremily simple and 
 relies only on the determination of $\mathbb S$.

The critical clique graph of strictly chordal graphs have interesting properties.


\begin{lemma}{\rm \cite{LKJ02}}
Let $G=(V,E)$ a chordal graph. Then, the set of critical cliques in $G$ forms a partition of $V$ .
\end{lemma}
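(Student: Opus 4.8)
The plan is to recognize the critical cliques as the equivalence classes of the true-twin relation on $V$, and then to invoke the elementary fact that the classes of any equivalence relation partition the underlying set. Define a relation $\equiv$ on $V$ by declaring $u \equiv v$ precisely when $N[u] = N[v]$, that is, when $u$ and $v$ are true twins.

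First I would verify that $\equiv$ is an equivalence relation. Reflexivity and symmetry are immediate from the symmetry of the equality $N[u] = N[v]$, and transitivity follows because $N[u] = N[v]$ together with $N[v] = N[w]$ force $N[u] = N[w]$. Hence $V$ is partitioned into the equivalence classes of $\equiv$, and it remains only to identify these classes with the critical cliques.

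Next I would check that each class is a clique whose vertices are pairwise true twins and that it is maximal with this property. If $u \equiv v$ with $u \neq v$, then $v \in N[v] = N[u]$, so $\{u,v\} \in E$; thus any two distinct vertices of a class are adjacent, and the class is a clique all of whose vertices are true twins. For maximality, if a vertex $w$ is a true twin of some vertex $u$ of a given class, then $w \equiv u$, so $w$ already belongs to that class. Conversely, any critical clique $C$ is contained in the class $[u]$ of any of its vertices $u$, and maximality of $C$ forces $C = [u]$. Therefore the critical cliques are exactly the equivalence classes of $\equiv$.

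Combining these observations, the set $\mathbb{C}$ of critical cliques coincides with the set of equivalence classes of $\equiv$, which partition $V$. I expect no genuine obstacle here; the only point requiring a little care is the maximality argument identifying each critical clique with a full equivalence class, together with the small remark that true twins are necessarily adjacent, so that each class is indeed a clique. It is worth noting that chordality is never used, so the statement in fact holds for every graph.
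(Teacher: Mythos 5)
Your proof is correct, and in fact the paper itself offers no argument for this lemma: it is stated as a citation to \cite{LKJ02}, so there is no in-paper proof to compare against. Your route --- observing that the true-twin relation $N[u]=N[v]$ is an equivalence relation, that each class is a clique because $u\equiv v$ with $u\neq v$ gives $v\in N[u]$, and that the critical cliques are exactly the equivalence classes --- is the standard argument behind the cited result, and your closing remark that chordality is never used is accurate (the statement holds for all graphs). One point deserves emphasis: your maximality step silently repairs an ambiguity in the paper's definition, which reads ``a maximal clique such that all vertices are true twins.'' Taken literally (a maximal clique of $G$ whose vertices happen to be pairwise true twins), the lemma would be false --- in the path $P_3$ no maximal clique consists of true twins, so no critical cliques would exist and $V$ would not be partitioned. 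The correct reading, which you use, is a clique of pairwise true twins that is \emph{maximal with respect to that property}; under it every vertex lies in a (possibly singleton) critical clique, namely its equivalence class, and your identification $C=[u]$ goes through exactly as you wrote it.
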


By Theorem \ref{theo:caract-BD}, the proof of the next lemma becomes immediate. 

\begin{lemma}\label{lem:vertices-cliques}
Let $G$ be a strictly chordal graph.
A critical clique of $G$ is  composed by:
\begin{enumerate}
\item all the simplicial vertices of a maximal clique of $G$ or
\item all the vertices of a minimal vertex separator of $G$.
\end{enumerate}
\end{lemma}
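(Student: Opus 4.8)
The plan is to reduce the whole statement to a single observation about maximal-clique membership. For any vertex $x$ one has $N[x]=\bigcup\{Q\in\mathbb{Q}\mid x\in Q\}$, because $y\in N[x]$ holds exactly when $x$ and $y$ lie together in some maximal clique. Consequently two vertices are true twins if and only if they belong to \emph{exactly the same} maximal cliques of $G$. Thus a critical clique $C$ is precisely a maximal set of vertices sharing one common family of maximal cliques; write $\mathcal{Q}_C$ for that family. I would first record this equivalence and then split the argument according to $|\mathcal{Q}_C|$.

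If $|\mathcal{Q}_C|=1$, every vertex of $C$ lies in a single maximal clique $Q$ and is therefore simplicial, with closed neighborhood $Q$. Conversely any simplicial vertex of $Q$ has closed neighborhood $Q$, hence is a true twin of the vertices of $C$ and lies in $C$. This yields case (1): $C$ is exactly the set of simplicial vertices of $Q$. (This part uses only chordality, not strict chordality.)

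In the main case $|\mathcal{Q}_C|\ge 2$, fix a clique-tree $T$. The cliques containing a fixed $v\in C$ form a subtree $T_v$ whose node set is $\mathcal{Q}_C$, and since $|\mathcal{Q}_C|\ge 2$ this subtree has an edge $\{Q',Q''\}$; by the Blair and Peyton characterization, $S:=Q'\cap Q''\in\mathbb{S}$ and clearly $C\subseteq S$. The real content is the reverse inclusion $S\subseteq C$, and this is exactly where strict chordality enters, via Theorem \ref{theo:caract-BD}: distinct minimal vertex separators are pairwise disjoint. I would show that every $w\in S$ shares the maximal-clique family of $v$. If not, the subtrees $T_v$ and $T_w$ both contain the edge $\{Q',Q''\}$ yet differ, so walking along $T$ from $\{Q',Q''\}$ toward a clique lying in one subtree but not the other produces an edge whose intersection is a minimal vertex separator $S'$ containing exactly one of $v,w$; but then $S$ and $S'$ share the other of $v,w$ while being distinct, contradicting disjointness. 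Hence $T_w=\mathcal{Q}_C$, so $w$ is a true twin of $v$ and $w\in C$, giving $S\subseteq C$ and therefore $C=S$, which is case (2).

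The expected obstacle is precisely this last inclusion: the containment $C\subseteq S$ is automatic, but excluding a separator strictly larger than the critical clique is what forces the disjointness property into play. Equivalently — and perhaps more in the spirit of ``immediate'' — one may argue that all edge-separators occurring inside $T_v$, being pairwise non-disjoint, must coincide as sets and equal $\bigcap_{Q\in\mathcal{Q}_C}Q$, which is exactly $S$; combined with the observation that $u\in C$ iff $u$ belongs to every clique of $\mathcal{Q}_C$, this gives $C=S$ directly. Everything else is routine bookkeeping with clique-trees already available from Blair and Peyton.
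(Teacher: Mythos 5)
Your proof is correct and rests on exactly the ingredient the paper itself invokes --- the pairwise disjointness of distinct minimal vertex separators from Theorem~\ref{theo:caract-BD}, combined with the standard Blair--Peyton clique-tree facts (edge intersections are minimal vertex separators; the cliques containing a vertex induce a subtree) --- so it is essentially the paper's approach with the details written out, since the paper offers no argument beyond declaring the lemma immediate from that theorem. One trivial wording slip: in the walking argument, $S$ and $S'$ share the one of $v,w$ that $S'$ contains (not ``the other''), while the other of $v,w$ witnesses $S\neq S'$; the contradiction with disjointness goes through regardless.
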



\section{Strictly Interval Graphs: new characterization and recognition algorithm}

In this section we restudy a subclass of interval graphs, 
the \textit{strictly interval graphs} ({\em SIG}), graphs which are at the same time
strictly chordal and interval graphs.
A first characterization of the  class was presented in \cite{MW16}.
We present structural results about the class,
leading to a new characterization and a simple recognition algorithm.

\begin{theorem}\label{theo:charact-SIG1}{\rm \cite{MW16}}
Let  $G=(V,E)$ be a non-complete strictly chordal graph.
The following statements are equivalent: 
\begin{enumerate}
\item $G$ is a strictly interval graph.
\item $G$ is a $\{\textrm{2-net, bipartite claw}\}$-free graph (see Fig. \ref{fig:forbidden}).  
 \item  leafage$(G) =2$.
\end{enumerate}
\end{theorem}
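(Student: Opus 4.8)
The plan is to establish $(1)\Leftrightarrow(3)$ first, then $(1)\Rightarrow(2)$, and to concentrate the real effort on $(2)\Rightarrow(1)$.

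The equivalence $(1)\Leftrightarrow(3)$ is essentially formal. Since $G$ is assumed strictly chordal, it is a strictly interval graph exactly when it is an interval graph. As $G$ is non-complete it has at least two maximal cliques, so every clique-tree has at least two leaves and $\ell(G)\ge 2$. A tree has exactly two leaves if and only if it is a path, hence $\ell(G)=2$ iff $G$ admits a clique-tree that is a path, which by Theorem~\ref{theo:charact-IG2} happens iff $G$ is an interval graph.

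For $(1)\Rightarrow(2)$ I argue by contraposition, using that the interval graphs form a hereditary class, so it suffices to note that neither forbidden graph is interval. Each of them admits a clique-tree that necessarily branches into three leaves (the separators are the three singletons around the central triangle of the 2-net, respectively the three cut vertices of the bipartite claw), so both have leafage $3$ and are non-interval by Theorem~\ref{theo:charact-IG2}; equivalently, the degree-one vertices $a,d,f$ of the 2-net and the leg-ends $g,k,n$ of the bipartite claw form asteroidal triples. Hence an interval, a fortiori strictly interval, graph contains neither as an induced subgraph.

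The substance of the theorem is $(2)\Rightarrow(1)$, which I would prove contrapositively through $(1)\Leftrightarrow(3)$: assuming $\ell(G)\ge3$ I will produce an induced 2-net or bipartite claw. I fix a clique-tree $T$ with the minimum number of leaves and let $Q$ be the median clique of three leaves $L_1,L_2,L_3$; then three internally disjoint paths leave $Q$ through distinct neighbours $Q_1,Q_2,Q_3$, and $S_i=Q\cap Q_i$ are minimal vertex separators. Here strict chordality is decisive: by Lemma~\ref{lem:vertices-cliques}, Theorem~\ref{theo:caract-BD} and Corollary~\ref{prop:sc}, each $S_i$ is a critical clique and distinct separators are disjoint, so there are no edges between the three branches except through $Q$. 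The argument then splits on whether $S_1,S_2,S_3$ are pairwise distinct or some two coincide. If they are pairwise distinct I pick $s_i\in S_i$ and $y_i\in Q_i\setminus S_i$: the $s_i$ lie in the common clique $Q$ and form a triangle, while disjointness forces each $y_i$ to be adjacent to $s_i$ only and the $y_i$ to be pairwise non-adjacent, so $\{s_1,s_2,s_3,y_1,y_2,y_3\}$ induces a 2-net. If instead two separators coincide, say $S_1=S_2=S$, then $S$ lies in at least three maximal cliques (Corollary~\ref{prop:sc}(b)) and the branching funnels through the single separator $S$, which I use as the centre of a bipartite claw.

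The main obstacle lies entirely in this second case. A separator shared by several cliques, taken alone, yields only a claw $K_{1,3}$ (three legs of length one), which is itself an interval graph; to obtain the bipartite claw I must realise three legs of length two, that is, extend each branch one clique further to a vertex non-adjacent to the centre. The delicate point is that such an extension is available precisely because $T$ was chosen with the minimum number of leaves: a branching that funnels through a single separator and whose legs are all short could be re-threaded into a path (a ``book'' of cliques sharing a common separator is interval), contradicting minimality. Thus minimality of $\ell(G)$ forces at least three genuinely long legs, and disjointness of the minimal separators again kills every unwanted cross edge. Making this re-threading argument airtight, and checking that the extracted six- or seven-vertex subgraph is genuinely induced, is where the proof's work concentrates.
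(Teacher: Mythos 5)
A preliminary remark: the paper itself gives no proof of this statement --- it is imported from \cite{MW16} as a known result --- so your attempt can only be compared with the paper's neighbouring argument, the proof of Theorem~\ref{theo:charact-SIG2}, which takes a genuinely different route through the critical clique graph: there the path structure of $D(CC(G))$ plays the role that the minimum-leaf clique-tree plays for you, and the forbidden subgraphs (bipartite claw, 2-net, dart) are extracted from the ways a vertex can attach to a path in $D(CC(G))$. Your direct clique-tree argument is a legitimate alternative, and most of it is sound: $(1)\Leftrightarrow(3)$ is correctly reduced to Theorem~\ref{theo:charact-IG2}; $(1)\Rightarrow(2)$ via heredity and the asteroidal triples in the two forbidden graphs is fine; and in $(2)\Rightarrow(1)$ your case of three pairwise distinct separators $S_1,S_2,S_3$ at a branch clique $Q$ is complete --- pairwise disjointness of minimal vertex separators (Theorem~\ref{theo:caract-BD}) together with the clique-subtree property does kill every cross edge, and $\{s_1,s_2,s_3,y_1,y_2,y_3\}$ induces the 2-net (the triangle with three pendants, per Figure~\ref{fig:forbidden}).

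The genuine gap is the repeated-separator case, and it is not only that you left the re-threading lemma unproved --- the lemma as you state it is false. Minimality of the leaf count does \emph{not} force ``at least three genuinely long legs'': in the bipartite claw itself, any minimum clique-tree branches at a clique $\{i,j\}$ with separators $S_1=S_2=\{i\}$ and $S_3=\{j\}$, and the third branch is the single leaf clique $\{j,k\}$ --- short, yet the graph has leafage $3$. What minimality actually forces is weaker and asymmetric: only the branches carrying the \emph{repeated} separator $S$ must be long, i.e.\ must contain a clique omitting $S$ (equivalently a vertex non-adjacent to $s\in S$); a branch through a separator $S_3\neq S$ supplies its leg for free, via $j\in S_3$ and $k\in Q_3\setminus S_3$, since $s\notin Q_3$. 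So your case analysis must split further ($S_1=S_2=S\neq S_3$ versus $S_1=S_2=S_3$), and the re-threading claim you need is: a branch at $Q$ all of whose cliques contain $S$ can be spliced as a path between $Q$ and another $S$-carrying neighbour, strictly decreasing the number of leaves. Verifying that the spliced tree is still a clique-tree is itself nontrivial --- one needs either the induced-subtree property checked edge by edge (using $Q\cap Q_2=S\subseteq Q_1$) or the maximum-weight spanning tree characterization of clique-trees, neither of which you invoke. Until that exchange argument is written out and the mixed case is handled, the heart of $(2)\Rightarrow(1)$ remains a sketch, as you yourself concede in your closing paragraph.
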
 

\begin{corollary}{\rm \label{cor:mc}}
Let $G$ be a non-complete strictly interval graph. 
Then every maximal clique of $G$ contains at most two minimal vertex separators.
\end{corollary}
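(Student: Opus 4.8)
The plan is to exploit the interval structure directly. By Theorem~\ref{theo:charact-SIG1}, a non-complete strictly interval graph $G$ has $\mathrm{leafage}(G)=2$, so among its clique-trees there is one, say $T$, that is a path $Q_1 - Q_2 - \cdots - Q_k$ on the maximal cliques of $G$ (a tree with exactly two leaves is a path; equivalently, apply the interval characterization of Theorem~\ref{theo:charact-IG2}). By the Blair--Peyton description of minimal vertex separators recalled in Section~2, the elements of $\mathbb{S}$ are exactly the edge intersections $S_j := Q_j \cap Q_{j+1}$, $j=1,\dots,k-1$. The whole question then becomes combinatorial on this path, and I would phrase it as follows: for a fixed maximal clique $Q_i$, which separators $S \in \mathbb{S}$ satisfy $S \subseteq Q_i$?

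First I would fix $Q_i$ and take any $S = Q_a \cap Q_{a+1} \in \mathbb{S}$ with $S \subseteq Q_i$. The key local tool is the running-intersection property of a clique-tree: for each vertex $v$ the set of cliques containing $v$ is a connected subtree of $T$, which on the path $T$ means that $\{\, l : v \in Q_l \,\}$ is an interval of consecutive indices. Hence if $a+1 \le i$, every $v \in S$ lies in $Q_a$ and in $Q_i$, and therefore in every clique in between; in particular $v \in Q_{i-1}\cap Q_i = S_{i-1}$, giving $S \subseteq S_{i-1}$. Symmetrically, if $a \ge i$ then $S \subseteq S_i$. Since every edge $\{a,a+1\}$ of the path has its pair of indices either both $\le i$ or both $\ge i$, these two cases are exhaustive, so any separator contained in $Q_i$ is nested inside $S_{i-1}$ or inside $S_i$.

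The crux is then to upgrade these containments to equalities, and this is exactly where strict chordality (rather than mere intervality) enters. By Theorem~\ref{theo:caract-BD}, distinct minimal vertex separators of $G$ are pairwise disjoint; since a separator is nonempty, $S \subseteq S_{i-1}$ forces $S = S_{i-1}$, and likewise $S \subseteq S_i$ forces $S = S_i$. Consequently every separator contained in $Q_i$ is one of the two neighboring separators $S_{i-1}, S_i$, so $Q_i$ contains at most two of them; at the two endpoints $Q_1, Q_k$ only one of these exists, in agreement with Corollary~\ref{prop:sc}(c). I expect the main obstacle to be chiefly expository: making the running-intersection step precise (contiguity of the index set on a path) and checking that the two index-range cases really cover every edge, including the boundary edges $a=i-1$ and $a=i$, where $S$ already equals $S_{i-1}$ or $S_i$ outright. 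It is worth stressing that without the disjointness of Theorem~\ref{theo:caract-BD} the statement would genuinely fail, since a general interval graph can have a long chain of nested separators all sitting inside a single maximal clique.
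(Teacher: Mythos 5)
Your proof is correct and follows exactly the route the paper intends: the corollary is stated there without any written proof, as an immediate consequence of the path clique-tree guaranteed by Theorem~\ref{theo:charact-SIG1} (via Theorem~\ref{theo:charact-IG2}), the Blair--Peyton identification of minimal vertex separators with the edge intersections of a clique-tree, and the pairwise disjointness of separators from Theorem~\ref{theo:caract-BD}. Your extra care --- using the running-intersection property to show that a separator coming from a distant edge of the path but contained in $Q_i$ must be nested in, and hence by disjointness equal to, one of the two neighboring separators $S_{i-1}, S_i$ --- is precisely the step the paper glosses over, and it is genuinely needed, since (as your final remark correctly observes) a general interval graph can have a chain of nested separators inside one maximal clique, so the naive count of two incident edges per clique would not suffice on its own.
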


As it was seen in Theorem \ref{theo:BW}, strictly chordal graphs are characterized 
by their critical clique graphs.
The next theorem presents  a new characterization of strictly interval graphs,
also based in their critical clique graphs,  that
supports a more efficient recognition algorithm.


\begin{theorem}\label{theo:charact-SIG2} 
Let $G$ be a non-complete strictly chordal graph. 
  $G$ is a   strictly interval graph iff
$D(CC(G))$ is a path.
\end{theorem}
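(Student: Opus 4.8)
The plan is to translate the interval condition on $G$, expressed via Gilmore--Hoffman (Theorem \ref{theo:charact-IG2}) as the existence of a path-shaped clique-tree, into a purely combinatorial condition on the block graph $CC(G)$. The cornerstone is a structural identification of the vertices surviving in $D(CC(G))$. Since $G$ is strictly chordal, $CC(G)$ is a block graph (Theorem \ref{theo:BW}), and in any block graph a vertex is simplicial precisely when it is a non-cut-vertex, i.e. it lies in a single block. By Lemma \ref{lem:vertices-cliques} every critical clique is either the set of simplicial vertices of one maximal clique of $G$ (hence contained in exactly one maximal clique, i.e. one block of $CC(G)$) or a minimal vertex separator $S$ (hence, by Corollary \ref{prop:sc}(b), contained in $\mu(S)+1\ge 2$ maximal cliques, i.e. $\ge 2$ blocks). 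First I would prove from this that the simplicial vertices of $CC(G)$ are exactly the simplicial-type critical cliques, so that the vertex set of $D(CC(G))$ is exactly $\mathbb{S}$; moreover two separators are adjacent in $CC(G)$ (equivalently in $D(CC(G))$) if and only if they are jointly contained in some maximal clique of $G$, using that a critical clique is entirely absorbed into any maximal clique meeting it.

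For the forward implication, assume $G$ is strictly interval. Then leafage$(G)=2$ (Theorem \ref{theo:charact-SIG1}), so $G$ admits a clique-tree that is a path $Q_1 - \cdots - Q_r$, whose consecutive intersections $Q_t\cap Q_{t+1}$ list the minimal vertex separators. Because the cliques containing a fixed vertex induce a subpath, each $S\in\mathbb{S}$ labels a block of consecutive edges; hence, reading the path, the distinct separators occur as consecutive runs $S_{j_1},\dots,S_{j_k}$. By Corollary \ref{cor:mc} every maximal clique contains at most two separators, so an internal clique $Q_t$ contains exactly the two labels of its incident edges, and these are either equal or consecutive in the run-order. By the adjacency description above, the only edges of $D(CC(G))$ join consecutive distinct separators, giving the path $S_{j_1} - \cdots - S_{j_k}$.

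For the converse I would reconstruct a path clique-tree of $G$ from the path $D(CC(G))=S_1-\cdots-S_k$ and invoke Theorem \ref{theo:charact-IG2}. Since a path is triangle-free and has maximum degree two, no block of $CC(G)$ contains three separators, so every maximal clique has at most two separators and any two separators sharing a clique are consecutive $S_i,S_{i+1}$; the unique block containing such a pair gives a spine clique $B_i$, and since the intersection of two maximal cliques consists only of separator vertices one gets $B_{i-1}\cap B_i=S_i$. The remaining blocks are pendant, each carrying a single separator $S_i$ (blocks with no separator are excluded because $G$ is non-complete). I would then form the path by laying out $B_1 - \cdots - B_{k-1}$ and inserting, between $B_{i-1}$ and $B_i$, all pendant blocks attached at $S_i$ (and the pendants of the two end separators at the two ends), every inserted edge being labelled $S_i$. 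The hard part will be verifying that this is a genuine clique-tree, i.e. that for every vertex of $G$ the cliques containing it form a subpath: for a separator $S_i$ this holds precisely because all blocks through $S_i$ were placed consecutively as $B_{i-1} - (\text{pendants of }S_i) - B_i$, and for simplicial vertices it is immediate. With a path clique-tree in hand, $G$ is interval, and being strictly chordal by hypothesis it is strictly interval.
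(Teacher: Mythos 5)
Your proposal is correct, and it takes a genuinely different route from the paper's, most visibly in the forward direction. The paper proves necessity by forbidden-subgraph case analysis: assuming three mvs-vertices form a path in $D(CC(G))$, it examines the four ways a further vertex could attach and shows each non-path attachment forces a bipartite claw, a 2-net, or a dart, contradicting Theorem \ref{theo:charact-SIG1}; you instead take a path clique-tree $Q_1-\cdots-Q_r$ (leafage $2$) and read off $D(CC(G))$ directly from the run structure of the edge labels, using Corollary \ref{cor:mc} to bound the separators per clique. One step you should make explicit there: to get that the edges labelled $S$ are \emph{consecutive}, it is not enough that the cliques containing $S$ form a subpath --- you also need every edge inside that subpath to have label exactly $S$, which follows from the pairwise disjointness of minimal vertex separators in strictly chordal graphs (Theorem \ref{theo:caract-BD}, item 2), since each such label is an mvs containing $S$. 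In the converse both arguments rebuild a linear order of maximal cliques from the path, but yours is the more explicit and rigorous version: the paper argues informally that re-attaching simplicial critical cliques ``does not affect'' the linear order (excluding bad attachments via dart-freeness), whereas you construct the spine blocks $B_i$ with $B_{i-1}\cap B_i=S_i$, insert the pendant blocks at each $S_i$ between $B_{i-1}$ and $B_i$, and verify the clique-tree (vertex-subpath) condition directly, which also handles the degenerate case of a single separator uniformly. Your preliminary lemma identifying $V(D(CC(G)))$ with $\mathbb{S}$ --- simplicial $=$ non-cut in a block graph, combined with Lemma \ref{lem:vertices-cliques}, Corollary \ref{prop:sc}(b), and the absorption of a critical clique into any maximal clique meeting it --- is only gestured at in the paper, so writing it out is a genuine improvement; the trade-off is that the paper's case analysis is shorter and stays close to the forbidden-subgraph characterization, while your argument is self-contained at the clique-tree level.
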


\begin{proof}
\bigskip
$\rightarrow$  
Let $G$ be a strictly interval graph.
By Theorem \ref{theo:BW}, $CC(G)$ is a block graph.
So, all $mvs$ of $G$ are represented by one vertex in $CC(G)$.

In $D(CC(G))$ there are not simplicial vertices of $G$.
Suppose that $S_1$, $S_2$ and $S_3$, $mvs$ of $G$, form a path $P=[s_1,s_2,s_3]$ in $D(CC(G))$.
Let us consider a vertex $s'$ that does not belong to this path but, together with $P$,
 establish a connected induced subgraph  of $D(CC(G))$ 
Vertex $s'$ can be linked to $P$ in three ways:

\begin{enumerate}
 \item forming the edge $(s',s_1)$  (or $(s',s_3))$: in this case it forms a path in $D(CC(G)$.
 \item forming the edge $(s',s_2)$: in this case $G$ has a bipartite claw.
 \item with two edges, forming a triangle: in this case $G$ has a 2-net.
 \item with three edges: in this case $G$ has a dart and it is not strictly chordal.
\end{enumerate}

Item 1 satisfies the hipothesis.
With itens 2, 3 and 4, by Theorem \ref{theo:charact-SIG1}, we do not have a strictly interval graph.
So, $D(CC(G))$ is a path.

\bigskip

$\leftarrow$
Let $D(CC(G))$ be a path. 
Since $G$ is strictly chordal, we need to be prove that $G$ is an interval graph, 
or, equivalently,  $G$ has a clique-tree that is a path (Theorem \ref{theo:charact-IG2})
 or the maximal cliques of $G$ can be linearly ordered. 
Since $D(CC(G))$ is a path, the maximal cliques of $G$ represented in $D(CC(G))$  admit a linear ordering. 

By Lemma \ref{lem:vertices-cliques}, each $mvs$ of $G$ establish a maximal clique of $CC(G)$.
Each edge of $D(CC(G))$ belongs to a maximal clique of $G$.
So, these maximal cliques also form a path in the clique tree.
 
Let us consider the addition of simplicial vertices of $CC(G)$ to $D(CC(G))$. 
This addition produces three different cases:

\begin{itemize}
\item[i] the construction of boundary cliques of $CC(G)$, 
containing the minimal vertex separators that are vertices of $D(CC(G))$.
It does not matter how many boundary cliques contain the same $mvs$;
 they will appear sequentially   in the clique-tree of $G$ and the resulting graph  is an interval graph.

\item[ii] The addition of a simplicial vertex adjacent to an edge of $D(CC(G))$;
it results in a maximal clique
with three vertices, one of them being a simplicial vertex.
In the clique-tree, this vertex corresponds to an increase of the cardinality of one maximal clique
and it does not affect the clique-tree.
A new addition over this same edge of $D(CC(G))$ cannot be performed: 
the new vertex cannot be a true twin of the first vertex added 
because the maximal cliques of $CC(G)$ have at most one simplicial vertex.
It also cannot be a false twin because the result would be a new maximal clique 
producing a dart in $G$ - impossible since $G$ is a strictly chordal graph.

\item[iii] two (or more) false twins as simplicial vertices adjacent to a maximal clique results 
in a dart induced subgraph in $G$. It is not a valid addition.
\end{itemize}

Therefore, there are only two possible additions of simplicial vertices to $D(CC(G))$ (by hypothesis, a path) 
and both maintain the linear order of the maximal cliques of $G$. 
Then, $G$ is an interval graph and the result follows.
\end{proof}

Theorem \ref{theo:charact-SIG2}  provides directly  a recognition algorithm.
The steps are:

\bigskip

\begin{enumerate}
\item[\it{Step 1}]: Recognition of $G$ as strictly chordal;
\item[\it{Step 2}]: Determination of $D(CC(G))$: 
    \begin{enumerate}
        \item determination of the vertices of $CC(G)$ (Lemma \ref{lem:vertices-cliques});
        \item removal of simplicial vertices of $CC(G)$ obtaining $D(CC(G))$;
    \end{enumerate} 
\item[\it{Step 3}]: Test if $D(CC(G))$ is a path.
\end{enumerate}

\bigskip

It is important to highlight that all these steps are trivially simple to implement and have linear time complexity.

\section{Spectral Basic Concepts}

The {\it Laplacian matrix of a graph} $G$ is defined as $L(G) = {\mathcal D}(G) - A(G)$, where ${\mathcal D}(G)$ 
denotes the diagonal degree matrix and $A(G)$ the adjacency matrix of $G$. 
As $L(G)$ is symmetric, there are $n$ real eigenvalues and as $L(G)$ is positive semidefinite, all these eigenvalues are non-negatives and 0 is an eigenvalue. We denote the eigenvalues of $L(G)$, the {\it Laplacian eigenvalues} of $G$, by $\mu_1(G) \geq \mu_2(G), ..., \geq \mu_n(G) =0$.

All different Laplacian eigenvalues of $G$ together with their multiplicities form the {\it Laplacian spectrum} of $G$, denoted by $SpecL(G) = [ \overline{\mu}_1^{m_1}; ...; \overline{\mu}_d^{m_d}]$, where  $1 \leq d \leq n$ is the number of distinct eigenvalues of $L(G)$ and $m_i$ the multiplicity of $\overline{\mu}_i$, $1 \leq i \leq d$.  
A graph is called {\it Laplacian integral} if $SpecL(G)$ consists of integers. 

The next results about linear algebra will be used later. 
Given a $n \times p$ matrix $A$ and a $m \times q$ matrix $B$, the 
{\it Kronecker product} (or tensor product) $A \otimes  B$ is the $nm \times pq$ block matrix:
$$ A \otimes  B =
    \begin{pmatrix}
   a_{1,1}B & a_{1,2}B & \dots & a_{1,p}B \\
     \vdots              &   \vdots              &              & \vdots \\
    a_{n,1}B & a_{n,2}B & \dots & a_{n,p}B \\
    \end{pmatrix}
    $$

Let $M$ be a real  $n \times n$ matrix in the following block form 
$$ M =
    \begin{pmatrix}
   M_{1,1} & \dots    & M_{1,t} \\
    \vdots   &  \ddots     & \vdots \\
    M_{t,1} & \dots       & M_{t,t} \\
    \end{pmatrix}
    $$
where $M_{i,j}$ are $n_i \times n_j$ matrices, $1 \leq i,j \leq t$ and $n=\sum_{r=1}^{t} n_r$. If $b_{i,j}$
 the average row sum of $M_{i,j}$, then $B = B(M) = (b_{i,j})$ is the {\it quotient matrix} of $M$. 
Moreover, if for each pair $i,j$, $M_{i,j}$ has constant row sum, then $B$ is the {\it equitable quotient matrix} of $M$. 
You et al. \cite{YYSX19} showed the following result.

\begin{theorem}\label{theo:equitable}{\rm\cite{YYSX19}}
Given a matrix $M$ and $B$ a equitable quotient matrix of $M$, the spectrum of $B$ is included in the spectrum of $M$ ($SpecB \subseteq SpecM$).
\end{theorem}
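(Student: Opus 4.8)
The plan is to exhibit an explicit intertwining relation between $M$ and $B$ mediated by the characteristic matrix of the underlying partition, and then transport eigenvectors of $B$ upward to eigenvectors of $M$. First I would introduce the $n \times t$ \emph{characteristic matrix} $C = (c_{v,k})$ of the partition that defines the block structure, setting $c_{v,k} = 1$ if the index $v$ lies in the $k$-th block and $c_{v,k} = 0$ otherwise. Since the blocks partition $\{1,\dots,n\}$, the columns of $C$ have pairwise disjoint supports and each is nonzero; hence $C$ has full column rank $t$, so the linear map $x \mapsto Cx$ is injective.

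The central step is to verify the identity $MC = CB$. Fixing an index $v$ lying in block $j$ and a block label $i$, the $(v,i)$ entry of $MC$ equals $\sum_{w} M_{v,w}\, c_{w,i} = \sum_{w \in \text{block } i} M_{v,w}$, which is exactly the sum of row $v$ taken over the block $M_{j,i}$. By the equitability hypothesis every such block has constant row sum $b_{j,i}$, so this entry equals $b_{j,i}$. On the other side, the $(v,i)$ entry of $CB$ equals $\sum_k c_{v,k}\, b_{k,i} = b_{j,i}$, since $v$ belongs to the single block $j$. The two matrices therefore agree entrywise, giving $MC = CB$.

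With this identity in hand the conclusion is immediate. Let $\lambda \in SpecB$ with eigenvector $x \neq 0$, so $Bx = \lambda x$. Then $M(Cx) = (MC)x = (CB)x = C(Bx) = \lambda (Cx)$, so $Cx$ is an eigenvector of $M$ for the eigenvalue $\lambda$, provided $Cx \neq 0$; the injectivity of $C$ established above guarantees exactly this. Hence $\lambda \in SpecM$, and since $\lambda$ was arbitrary we obtain $SpecB \subseteq SpecM$.

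I would expect the only delicate point to be the verification of $MC = CB$, and more precisely the recognition that this is precisely where the constant-row-sum assumption is indispensable. For a general quotient matrix built from \emph{average} row sums the relation degrades to $C^{\top} M C = (C^{\top} C)\, B$, which only yields eigenvalue interlacing rather than inclusion and does not lift eigenvectors; thus the equitability hypothesis is not a convenience but the crux of the argument.
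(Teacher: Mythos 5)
Your proof is correct; note that the paper states this theorem as a cited result from \cite{YYSX19} and gives no proof of its own, so there is nothing internal to compare against. Your argument --- introducing the characteristic matrix $C$ of the partition, verifying the intertwining identity $MC = CB$ via the constant-row-sum hypothesis, and lifting eigenvectors through the injective map $x \mapsto Cx$ --- is the standard proof of this fact and is essentially the one found in the cited reference, with your closing remark correctly identifying that a merely average-row-sum quotient satisfies only $C^{\top}MC = (C^{\top}C)B$ and hence yields interlacing rather than spectral inclusion.
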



\section{A new subclass: $SI$-core graphs}

We  define a new subclass of strictly interval graphs, the {\it strictly interval-core graphs},
or simply {\it $SI$-core graphs}.

$G$ is a {\it $SI$-core graph} iff it is a strictly interval graph with  exactly two minimal vertex separators;
 these separators have the same cardinality $s$, $s \geq 2$, their union defines a maximal clique of size $2s$
 and each separator has the same number $p \geq 2$ of simplicial vertices adjacent to it.
Given fixed integers $s\geq 2$ and $p \geq 2$, we denote $\mathcal{G}(s,p)$ the set of $SI$-core graphs
with these parameters.

Figure \ref{fig:plain} shows an example of a $SI$-core graph $G \in \mathcal{G}(2,3)$.  

\begin{figure}[h]
\begin{center}
\begin{tikzpicture}
  [scale=.43,auto=left]
\tikzstyle{every node}=[circle, draw, fill=black,
                         inner sep=1.8pt, minimum width=4pt]
\node (a) at (6,-3){};
\node (b) at (6,0){};
\node (f) at (9,-3){};
\node (e) at (9,0){};
\node (g) at (3,0.5){};
\node (h) at (2,-1.5){};
\node (i) at (3,-3.5){};
\node (m) at (12,0.5){};
\node (n) at (13,-1.5){};
\node (o) at (12,-3.5){};
\foreach \from/\to in {a/b, a/e, a/f, b/e, b/f, e/f,  
g/a,g/b,h/a,h/b,i/a,i/b,m/e,m/f,n/e,n/f,o/e,o/f,m/n,g/h,g/i,h/i}  
\draw (\from) -- (\to);

\end{tikzpicture}
\caption{A $SI$-core graph in $\mathcal{G}(2,3)$}
\label{fig:plain}
\end{center}
\end{figure}
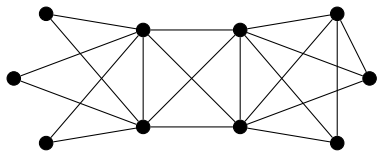

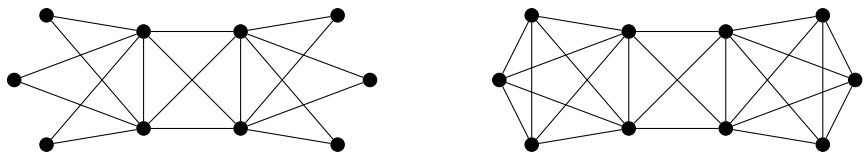
\begin{figure}[h]
\begin{center}
\begin{tikzpicture}
  [scale=.43,auto=left]
\tikzstyle{every node}=[circle, draw, fill=black,
                         inner sep=1.8pt, minimum width=4pt]
\node (a) at (6,-3){};
\node (b) at (6,0){};
\node (f) at (9,-3){};
\node (e) at (9,0){};
\node (g) at (3,0.5){};
\node (h) at (2,-1.5){};
\node (i) at (3,-3.5){};
\node (m) at (12,0.5){};
\node (n) at (13,-1.5){};
\node (o) at (12,-3.5){};
\foreach \from/\to in {a/b, a/e, a/f, b/e, b/f, e/f,  
g/a,g/b,h/a,h/b,i/a,i/b,m/e,m/f,n/e,n/f,o/e,o/f,h/g,h/i,g/i, m/n,m/o,n/o}  
\draw (\from) -- (\to);
\node (a) at (-9,-3){};
\node (b) at (-9,0){};
\node (f) at (-6,-3){};
\node (e) at (-6,0){};
\node (g) at (-12,0.5){};
\node (h) at (-13,-1.5){};
\node (i) at (-12,-3.5){};
\node (m) at (-3,0.5){};
\node (n) at (-2,-1.5){};
\node (o) at (-3,-3.5){};
\foreach \from/\to in {a/b, a/e, a/f, b/e, b/f, e/f,  
g/a,g/b,h/a,h/b,i/a,i/b,m/e,m/f,n/e,n/f,o/e,o/f}  
\draw (\from) -- (\to);

\end{tikzpicture}
\caption{The $min$ and $max$ $SI$-core graphs in $\mathcal{G}(2,3)$}
\label{fig:minmaxplain}
\end{center}
\end{figure}

For an element of the new class we will present its Laplacian eigenvalues based on its structure, showing that the graph can be even an integral graph.
It is interesting to observe that the class of  $SI$-core graphs is $P_5$-free, 
it  is not a subclass of cographs and it is not a subclass of  split graphs.
Actually a $SI$-core graph is a multi-core graph \cite{MP24}.

Two special graphs belonging to $\mathcal{G}(s,p)$ can be considered:
the graphs with minimum and  maximum number of edges, for now on denoted $min$ $SI$-core graph, $G_{min}(s,p)$,
 and $max$ $SI$-core graph, $G_{max}(s,p)$.

Figure \ref{fig:minmaxplain} shows examples of the $min$  and the $max$ $SI$-core graphs in $\mathcal{G}(2,3)$.
Observe that the $min$ $SI$-core graph in Figure \ref{fig:minmaxplain} is the ``carioca" graph,
that belongs to the class defined by Kirkland et al. in \cite{KFDA10}, restudied in \cite{AJM24} as
$(k,t)$-split graphs. 
The $min$ $SI$-core graph of $\mathcal{G}(s,p)$, for any $s,p$, 
constitute the intersection between the two classes.

\begin{property}\label{edges}
Let $s,p$ be fixed integer values.
The min $SI$-core graph  has $\frac{s(s-1)}{2} +2ps$ edges;
the max $SI$-core graph has $\frac{s(s-1)}{2} +2ps+ p(p-1)$ edges.
\end{property}

Observe that it is possible to establish, for $s$  and $p$  fixed integers,  the number of non-isomorphic $SI$-core graphs.
In order to do that we need the concept of the partition function $a(n)$. 

A \textit{partition of a positive integer $n$},
also called an integer partition, is a way of writing $n$ as a sum of positive integers.
The number of partitions of $n$ is given by the partition function $a(n)$, entry A000041 in \cite{oeis}.

\begin{theorem}\label{theo:counting}
Let $s,p$ be fixed integer values.
There are $\frac{a(p)(a(p)+1)}{2}$ non-isormophic $SI$-core graphs.  
\end{theorem}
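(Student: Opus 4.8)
The plan is to set up a bijection between isomorphism classes of graphs in $\mathcal{G}(s,p)$ and unordered pairs of partitions of $p$, and then count the latter. First I would pin down the rigid skeleton of an arbitrary $G \in \mathcal{G}(s,p)$. By definition $G$ has exactly two minimal vertex separators $S_1, S_2$, each of size $s$, with $S_1\cup S_2$ a maximal clique of size $2s$; by Lemma \ref{lem:vertices-cliques} each $S_i$ is a single critical clique, and since $G$ is strictly interval, $D(CC(G))$ is the path $[S_1,S_2]$ (Theorem \ref{theo:charact-SIG2}), so every remaining critical clique is a leaf of $CC(G)$ attached to exactly one of $S_1,S_2$. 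In particular no simplicial vertex is adjacent to both separators, since its closed neighborhood would then strictly contain the maximal clique $S_1\cup S_2$. Hence the $2p$ simplicial vertices split into $p$ adjacent to $S_1$ and $p$ adjacent to $S_2$.

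Next I would analyze one side, say $S_1$. The key claim is that the $p$ simplicial vertices adjacent to $S_1$ decompose into true-twin critical cliques with no edges between distinct cliques, so that the multiset of their sizes is a partition $\lambda \vdash p$. Indeed, if $u$ is such a vertex its unique maximal clique is $N[u]=S_1\cup C$, where $C$ is its critical clique; were $u$ adjacent to a vertex $w$ lying in a different critical clique $C'$, then $w\in N[u]=S_1\cup C$, contradicting $w\in C'\setminus S_1$. Thus the maximal cliques on the $S_1$ side are exactly the $S_1\cup C_i$, and $|C_1|+\dots+|C_k|=p$. The same argument yields a partition $\mu \vdash p$ on the $S_2$ side, so $G$ determines the ordered pair $(\lambda,\mu)$.

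I would then establish realizability and classify the isomorphisms. Conversely, given any $\lambda,\mu\vdash p$, assembling $S_1\cup S_2$ together with the prescribed true-twin cliques yields a graph whose critical clique graph is a tree (two internal vertices $S_1,S_2$ carrying pendant leaves), hence a block graph, so the graph is strictly chordal by Theorem \ref{theo:BW}; deleting the leaves leaves the path $[S_1,S_2]$, so it is strictly interval by Theorem \ref{theo:charact-SIG2}, and one checks it lies in $\mathcal{G}(s,p)$. For isomorphisms, the unordered pair $\{S_1,S_2\}$ is recovered from $G$ as its set of minimal vertex separators (invariant across all clique-trees), so any isomorphism either fixes both separators, forcing $(\lambda,\mu)=(\lambda',\mu')$, or swaps them, forcing $(\lambda,\mu)=(\mu',\lambda')$; conversely the symmetry of the clique $S_1\cup S_2$ realizes the swap as an actual isomorphism. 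Hence two such graphs are isomorphic iff the corresponding unordered pairs $\{\lambda,\mu\}$ coincide.

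Finally the isomorphism classes correspond bijectively to the $2$-element multisets drawn from the $a(p)$ partitions of $p$, of which there are $\binom{a(p)+1}{2}=\frac{a(p)(a(p)+1)}{2}$; note this is independent of $s$, as expected. I expect the main obstacle to be the rigidity step of the second paragraph, namely proving that the only freedom on each side is the clustering of its simplicial vertices into true-twin cliques (equivalently, that distinct clusters are mutually non-adjacent and that no vertex straddles both separators); once that is secured, realizability and the swap symmetry are routine bookkeeping.
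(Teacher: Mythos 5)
Your proof is correct and takes essentially the same approach as the paper's: both identify a graph in $\mathcal{G}(s,p)$ with an unordered pair of partitions of $p$ (one for the simplicial vertices attached to each separator) and count these unordered pairs as the triangular number $\frac{a(p)(a(p)+1)}{2}$. You additionally supply the rigidity, realizability, and isomorphism-classification details that the paper's brief sketch leaves implicit (the paper's own count even writes $p$ where $a(p)$ is clearly intended in its arithmetic-progression argument), so your write-up is a fully justified version of the same idea.
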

\begin{proof}
Let $G \in {\mathcal G}(s,p)$.
Each partition of the $p$ simplicial vertices adjacent to a $mvs$ can be combined
with a partition of the other $p$ simplicial vertices.
Hence, with the first partition we have $p$ diferent combinations.
With the second partition we have $p-1$ different combinations  and so on.
So, the number of combinations without repetition is given by 
the sum of the elements of an arithmetic progression from 1 to  $p$. 
\end{proof}

From the results above it is not difficult to see that there are $SI$-core graphs for
any even number of vertices greater or equal 8.
Moreover, we can determine the number of elements of the chosen class.
For instance,  there are 66 non-isomorphic $SI$-core graphs in ${\mathcal G}(4,6)$.


\section{Laplacian spectrum of $SI$-core graphs}

\begin{theorem}\label{theo:principal}
Let $G_{min}(s,p)$ be the $min$ $SI$-core graph in $\mathcal{G}(s,p)$. 
Then the number of non integer Laplacian eigenvalues of $G_{min}(s,p)$ with $s \geq 2$ and $p \geq 2$, $\ell$, is
$ \ell = 0$ or $ \ell = 2$.
\end{theorem}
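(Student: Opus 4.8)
The plan is to exploit the high symmetry of $G_{min}(s,p)$ to split its Laplacian spectrum into a large ``twin'' part that is manifestly integral and a small four-dimensional ``quotient'' part, and then to show that the only possibly non-integer eigenvalues are the two roots of a single monic quadratic with integer coefficients. First I would fix the structure: writing $S_1,S_2$ for the two minimal vertex separators (each of size $s$), $A$ for the $p$ simplicial vertices adjacent to $S_1$, and $B$ for the $p$ simplicial vertices adjacent to $S_2$, the graph $G_{min}(s,p)$ is the clique $S_1\cup S_2$ on $2s$ vertices together with the independent sets $A$ and $B$, where each vertex of $A$ is joined to all of $S_1$ and each vertex of $B$ to all of $S_2$, with no edges inside $A$ or $B$ (this absence of internal edges is exactly what makes the graph the \emph{minimum} one). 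In particular $n=2s+2p$, and the vertex partition $\{S_1,S_2,A,B\}$ is equitable.

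Next I would extract the integral bulk of the spectrum from the twin vertices. Any two vertices of $A$ are false twins and any two vertices of $S_1$ are true twins, and likewise for $B$ and $S_2$. Using the standard facts that for false twins $u,v$ the difference $e_u-e_v$ of standard basis vectors is a Laplacian eigenvector with eigenvalue $\deg(u)$, and for true twins with eigenvalue $\deg(u)+1$, I obtain $2(p-1)$ eigenvectors with eigenvalue $s$ (the common degree inside $A$ and $B$) and $2(s-1)$ eigenvectors with eigenvalue $2s+p$ (since a vertex of $S_1$ has degree $2s-1+p$). These account for $2s+2p-4$ integer eigenvalues, all orthogonal to the subspace of vectors constant on each of $S_1,S_2,A,B$.

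The remaining four eigenvalues are those of the equitable quotient matrix of $L$ for the partition $\{S_1,S_2,A,B\}$, which by Theorem~\ref{theo:equitable} are genuine Laplacian eigenvalues of $G_{min}(s,p)$. A direct computation of the block row sums gives
\[
B=\begin{pmatrix} s+p & -s & -p & 0\\ -s & s+p & 0 & -p\\ -s & 0 & s & 0\\ 0 & -s & 0 & s\end{pmatrix}.
\]
I would then use the involution swapping $S_1\leftrightarrow S_2$ and $A\leftrightarrow B$ to block-diagonalize $B$ into a symmetric $2\times 2$ block $\left(\begin{smallmatrix} p & -p\\ -s & s\end{smallmatrix}\right)$, with integer eigenvalues $0$ and $s+p$, and an antisymmetric $2\times 2$ block $\left(\begin{smallmatrix} 2s+p & -p\\ -s & s\end{smallmatrix}\right)$, whose characteristic polynomial is $\lambda^2-(3s+p)\lambda+2s^2$.

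Finally comes the crux. The last two eigenvalues are the roots of a monic quadratic with integer coefficients, so they are either both rational or a conjugate pair of irrationals; and since a rational root of a monic integer polynomial is necessarily an integer, rationality of the pair forces integrality of both. Hence either both roots are integers (when the discriminant $s^2+6sp+p^2$ is a perfect square) or both are irrational, so the number $\ell$ of non-integer Laplacian eigenvalues is $0$ or $2$. The step I expect to require the most care is confirming that these four quotient eigenvalues are exactly the ones not already produced by the twins — a dimension count, $2(p-1)+2(s-1)+4=2s+2p$, settles this — and then observing that no coincidence can split the conjugate pair, which is precisely what the rational-root argument rules out.
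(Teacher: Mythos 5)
Your proof is correct and takes essentially the same route as the paper's: the twin differences yield the integer eigenvalues $s^{2(p-1)}$ and $(2s+p)^{2(s-1)}$, and the $4\times 4$ equitable quotient contributes $0$, $s+p$, and the two roots of $\lambda^2-(3s+p)\lambda+2s^2$ --- your odd (swap-antisymmetric) block is exactly the paper's matrix $M_1=\bigl(\begin{smallmatrix}2s+p & -p\\ -s & s\end{smallmatrix}\bigr)$ with discriminant $(s+p)^2+4sp=s^2+6sp+p^2$, and your rational-root argument replaces the paper's equivalent parity observation. Your explicit orthogonality and dimension count $2(p-1)+2(s-1)+4=n$, confirming that these exhaust the spectrum, is if anything slightly more careful than the paper's ``multiplicity at least'' phrasing.
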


\begin{proof}
Let  $G(s,p)$ be the $min$ $SI$-core graph in $\mathcal{G}(s,p)$. 
Then,  $L(G_{min}(s,p))$ can be written as 
$$
    \begin{pmatrix}

    (2s+p) \mathbf{I}_{2s} - \mathbf{J}_{2s} & -\mathbf{X} \\
    -\mathbf{X}^T                            & \mathbf{J}_{2p}
    \end{pmatrix}
   , $$
where $\mathbf{X} = \mathbf{I}_2 \otimes \mathbf{J}_{s,p}$ and $\otimes$ denotes the Kronecker product of matrices.
\newline

It is easy to conclude that $2s +p$ is an eigenvalue of $L(G_{min}(s,p))$ with multiplicity at least $2(s-1)$ and $s$ is an egenvalue of $L(G_{min}(s,p))$ with multiplicity at least $2(p-1)$ (eigenvalues corresponding to true and false twins).

By Theorem \ref{theo:equitable}, the spectrum of the $4 \times 4$ equitable quotient matrix $M$ is included in the spectrum of $L(G_{min}(s,p))$:  
$$M =
    \begin{pmatrix}
    (2s+p) \mathbf{I}_2 - \mathbf{J}_2 & - p \mathbf{I}_2 \\
    - s \mathbf{I}_2                       &    s \mathbf{J}_2
    \end{pmatrix}
.$$ 

Observe that the eigenvalues of $M$ are:

\noindent $\bullet$ $0$, corresonding to eigenvector $[\mathbf{1}_2, \mathbf{1}_2]$;\newline
$\bullet$ $s+p$, corresponding to eigenvector $[p\mathbf{1}_2, -s\mathbf{1}_2]$; \newline
$\bullet$ $\lambda$, corresponding to eigenvector $[\alpha z, \beta z]$, 
where $z$ is ortogonal to vector $\mathbf{1}_2$ and $\lambda$ is an eigenvalue of matrix $M_1$ correspndig to eigenvector 
$[\alpha, \beta]$, $M_1 =     \begin{pmatrix}
 (2s+p)  & - p \\
 -s       &   s
    \end{pmatrix}
$.

Matrix $M_1$ has eigenvalues given by  
$\frac{2s+p+s \pm \sqrt{(2s+p-s)^2 + 4sp}}{2}$. 
Since $2s+p+s$ e $2s+p-s$ are striclty positive integer numbers with the same parity, $M_1$ matrix eigenvalues are integer if and only if e  $(2s+p-s)^2 + 4ps$ is a perfect square. 
Then we can conclude that all eigenvalues of $L(G_{min}(s,p))$ are integer numbers if and only if 
 $(2s+p-s)^2 + 4ps$ is a perfect square ($\ell = 0$) or $L(G_{min}(s,p))$ has exactly two non integer Laplacian eigenvalues ($\ell = 2$).
\end{proof}

\begin{corollary}\label{cor:valuesGmin}
Let $G_{min}(s,p)$ be the $min$ $SI$-core graph in $\mathcal{G}(s,p)$. 
The Laplacian spectrum of $G_{min}(s,p)$ with $s \geq 2$ and $p \geq 2$ is
\[
\begin{aligned}
\bigg[  \frac{3s+p  +\sqrt{(s+p)^2 + 4sp}}{2};( 2s+p)^{2(s-1)};(s+p); s^{2(p-1)};\\
\frac{3s+p - \sqrt{(s+p)^2 + 4sp}}{2};0\bigg].
\end{aligned}
\]

\end{corollary}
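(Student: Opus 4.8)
The plan is to obtain the full spectrum by collecting the eigenvalues already exhibited in the proof of Theorem \ref{theo:principal} and then closing the argument with a dimension count that turns the lower bounds on the multiplicities into equalities. Nothing new about the matrix is needed; the work is bookkeeping plus one distinctness check.

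First I would recall the twin contributions. In $G_{min}(s,p)$ the $s$ vertices of each minimal vertex separator are true twins of common degree $2s+p-1$, so the $s-1$ difference vectors supported on one separator are eigenvectors for $2s+p$; the two separators jointly give multiplicity $2(s-1)$ for the eigenvalue $2s+p$. Dually, the $p$ simplicial vertices attached to each separator are false twins of common degree $s$, contributing $p-1$ difference eigenvectors each for the eigenvalue $s$, hence multiplicity $2(p-1)$ for $s$. These are exactly the two ``at least'' estimates recorded in Theorem \ref{theo:principal}.

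Next I would use the equitable partition into the four classes (the two separators and the two sets of simplicial vertices). By Theorem \ref{theo:equitable} the eigenvalues of the $4\times 4$ quotient matrix $M$ are Laplacian eigenvalues of $G_{min}(s,p)$; as computed in the proof of Theorem \ref{theo:principal} these are $0$, $s+p$, and the two eigenvalues of $M_1$. Since $\mathrm{tr}\,M_1 = 3s+p$ and $\det M_1 = 2s^2$, the latter equal $\frac{3s+p \pm \sqrt{(3s+p)^2 - 8s^2}}{2} = \frac{3s+p \pm \sqrt{(s+p)^2 + 4sp}}{2}$, which are precisely the two radical entries of the claimed spectrum.

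The decisive step is the count. The twin difference vectors span a subspace of dimension $2(s-1)+2(p-1)$ orthogonal to the span of the four class indicators, and the latter $4$-dimensional space carries exactly the quotient spectrum; together they account for $2(s-1)+2(p-1)+4 = 2s+2p = n$ dimensions. Hence the listed values exhaust the spectrum, the multiplicities $2(s-1)$ and $2(p-1)$ are exact, and the ordering stated in the corollary follows. The only point needing care, and the one I expect to be the mild obstacle, is to confirm that the four quotient eigenvalues are pairwise distinct and distinct from $2s+p$ and $s$, so that they appear with multiplicity one as written. I would verify this directly: neither root of $M_1$ is $0$ (their product is $2s^2>0$); evaluating the characteristic polynomial $\lambda^2-(3s+p)\lambda+2s^2$ at $2s+p$ and at $s$ yields $-ps\neq 0$, while at $s+p$ it yields $-2sp\neq 0$; and $s+p\neq 0$. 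All of these nonvanishing conditions hold for $s,p\geq 2$, which confirms that the six listed values are distinct with exactly the stated multiplicities.
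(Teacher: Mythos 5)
Your proposal is correct and follows essentially the same route as the paper: the paper derives exactly these eigenvalues inside the proof of Theorem \ref{theo:principal} (the twin eigenvalues $2s+p$ and $s$ with multiplicities at least $2(s-1)$ and $2(p-1)$, plus $0$, $s+p$, and the two roots of $M_1$ from the equitable quotient matrix) and then states the corollary without further argument. Your explicit dimension count $2(s-1)+2(p-1)+4 = 2s+2p = n$, together with the check that the quotient eigenvalues are nonzero and distinct from $2s+p$, $s$, and $s+p$, merely makes rigorous the completeness step that the paper leaves implicit, since Theorem \ref{theo:equitable} by itself gives only containment of spectra.
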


$G_{min}(s,p)$ is Laplacian integral when $(s+p)^2 + 4sp$ is a perfect square. For instance, $G_{min}(2,3)$, $G_{min}(3,10)$, $G_{min}(4,6)$ are Laplacian integral.  Next, by using the concepts given in \cite{CR17}, it will be proved that if for fixed $s,p \geq 2$, $G_{min}(s,p)$ is Laplacian integral, then all $SI$-core graphs in $\mathcal{G}(s,p)$ are Laplacian integral.

Given a connected graph $G=(V(G), E(G))$, let $(F, S)$ be a cluster in $G$, 
where $F$ is the maximal set of false twin vertices and $S$ is the set of the shared neighbors. 
Let $H=(V(H), E(H))$ be any graph such that  $V(H) = F$. 
The graph $G(H) = (V(G(H)), E(G(H)))$ is defined having $V(G(H)) = V(G)$ and $E(G(H)) = E(G) \cup E(H)$.

Let $(F_1,S_1)$ and $(F_2,S_2)$ be a pair of  clusters in a graph $G$.  We say that $(F_1,S_1)$ and $(F_2,S_2)$ are disjoint if $F_1 \cap F_2 = \emptyset$ and $S_1 \cap S_2 = \emptyset$. 
Now, let be a connected graph $G$ with a family of $k \geq 1$ pairwise disjoint clusters $(F_1, S_1), ..., (F_k, S_k)$, with $|S_j|= s_j$, $|F_j| = c_j$, $1 \leq j \leq k$,  and consider a family of graphs $H_1, ..., H_k$ with $|V(H_j)| = c_j$ for all $j$. The graph $G(H_1, ..., H_k)$ is obtained from $G$ by adding the edges in  $H_1, ..., H_k$ as before.
The Laplacian eigenvalues of $H_1, ..., H_k$ are denoted by  $\mu_1(H_j) \geq ...\geq \mu_{c_{j}} (H_j) = 0$, for $ j=1, ..., k$.

Corollary 4.3 in \cite{CR17} shows that the Laplacian eigenvalues of $G(H_1, ..., H_k)$ remain the same, independently of the graphs $H_1, ...., H_k$, with the exception of $|F_1| +.... +|F_k| - k$ of them, and 
$$ \det(\lambda I - L(G(H_1, ..., H_k)) = p_L(\lambda) \Pi_{j=1}^{k}(\Pi_{i=1}^{c_j-1} (\lambda - (s_j + \mu_i(L(H_j))))  ). $$

Therefore, 
$s_j + \mu_i(H) \in specL(G(H_1, ..., H_k))$, for $1 \leq j \leq k$ and $1 \leq i \leq c_j-1$ 
and the remaining eigenvalues of $G(H_1, ..., H_k)$ are the roots of the polynomial $p_L(\lambda)$ 
of degree  $n-\sum_{j=1}^{k} c_j  + k$, where $ \det(\lambda I - L(G)) = p_L(\lambda) \Pi_{j=1}^{k} (\lambda - s_j)^{c_j-1}$. \newline

\begin{lemma}\label{graphH}
Given $s,p \geq 2$, let $G(s,p) \not = G_{min}(s,p)$ be a $SI$-core graph in $\mathcal{G}(s,p)$, and $H$ the subgraph of $G$ induced by  $E(H) = E(G(s,p)) \backslash E(G_{min}(s,p))$. Then it exists $H = H_1 \cup H_2$ such that $|V(H_1)| = |V(H_2)| = p$ and  $G(s,p) = G_{min}(s,p)(H_1, H_2)$.
\end{lemma}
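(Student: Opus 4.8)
The plan is to display $G_{min}(s,p)$ and $G(s,p)$ as graphs on a common vertex set that share the same ``backbone'' and differ only in edges internal to the two groups of simplicial vertices; $H_1$ and $H_2$ will then be the subgraphs that $G(s,p)$ induces on these two groups. First I would fix the common vertex set. By the definition of $\mathcal{G}(s,p)$, every graph in the class---in particular both $G_{min}(s,p)$ and $G(s,p)$---has the same two minimal vertex separators $S_1,S_2$ with $|S_1|=|S_2|=s$ whose union $S_1\cup S_2$ is a maximal clique of size $2s$, a set $F_1$ of $p$ simplicial vertices adjacent to $S_1$, and a set $F_2$ of $p$ simplicial vertices adjacent to $S_2$. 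Thus $V(G(s,p))=V(G_{min}(s,p))=S_1\cup S_2\cup F_1\cup F_2$, and setting $H_1=G(s,p)[F_1]$ and $H_2=G(s,p)[F_2]$ gives immediately $|V(H_1)|=|V(H_2)|=p$.

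The core step is to pin down the maximal cliques of an arbitrary $G\in\mathcal{G}(s,p)$. Since $G$ is a non-complete strictly interval graph, its maximal cliques admit a linear order (Theorem~\ref{theo:charact-SIG1}). The clique $S_1\cup S_2$ carries both separators; by Corollary~\ref{cor:mc} no maximal clique contains more than two minimal vertex separators, and since $S_1,S_2$ are the only two, any clique with two of them must contain $S_1\cup S_2$ and hence equal it. Every remaining maximal clique is therefore a boundary clique and, by Corollary~\ref{prop:sc}(c), has exactly one of $S_1,S_2$ as its set of non-simplicial vertices; such a clique has the form $S_1\cup C$ with $C\subseteq F_1$ or $S_2\cup C'$ with $C'\subseteq F_2$. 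Consequently every edge of $G$ is either a backbone edge---an edge of the clique on $S_1\cup S_2$, an edge of the complete join between $S_1$ and $F_1$, or an edge of the complete join between $S_2$ and $F_2$---or else an edge lying inside $F_1$ or inside $F_2$; in particular no edge joins $F_1$ to $S_2\cup F_2$, nor $F_2$ to $S_1\cup F_1$. Because $G_{min}(s,p)$ is the member of the class in which $F_1$ and $F_2$ are independent sets, its edge set is precisely the backbone, so $E(G(s,p))\setminus E(G_{min}(s,p))=E(H_1)\cup E(H_2)$.

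It then remains to recognise this as an instance of the construction of \cite{CR17}. In $G_{min}(s,p)$ the vertices of each $F_j$ are pairwise non-adjacent and share the common neighbourhood $S_j$, so each pair $(F_j,S_j)$ is a cluster of false twins with shared neighbours $S_j$; the two clusters are vertex-disjoint with $|F_1|=|F_2|=p$. Writing $H=H_1\cup H_2$ as a disjoint union (legitimate since there are no $F_1$--$F_2$ edges), the operation $G_{min}(s,p)(H_1,H_2)$ yields, by definition, the graph on $V(G_{min}(s,p))$ with edge set $E(G_{min}(s,p))\cup E(H_1)\cup E(H_2)$, which by the previous paragraph is exactly $G(s,p)$. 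Hence $G(s,p)=G_{min}(s,p)(H_1,H_2)$, as required.

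I expect the main obstacle to be the second paragraph: showing that the backbone is common to all of $\mathcal{G}(s,p)$ and that no ``cross'' edge between the two sides can occur. Gem/dart-freeness alone does not suffice here---a single stray edge from $F_1$ to $S_2$ would leave the graph strictly chordal while destroying the two-separator structure---so the argument must route through the linear order of the maximal cliques together with the property that each boundary clique carries a single minimal vertex separator. The remaining steps are routine bookkeeping with the definitions.
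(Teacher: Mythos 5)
Your proposal is correct and takes essentially the same route as the paper: both come down to showing that $E(H)=E(G(s,p))\setminus E(G_{min}(s,p))$ consists solely of edges among simplicial vertices attached to a single separator, which the paper asserts in one line (after the edge count of Property \ref{edges}) and which you justify in full through the maximal-clique structure and the true-twin property of the separators. The only substantive differences are that the paper's proof also records the finer structure $H_j=\cup_l K_l$ over a partition of $p$ (which your analysis yields implicitly, since the simplicial parts $C$ of the maximal cliques $S_j\cup C$ are cliques partitioning $F_j$, and which Theorem \ref{theo:principalplainnotmin} uses downstream), and that the linear ordering of maximal cliques you invoke comes from Theorem \ref{theo:charact-IG2} rather than Theorem \ref{theo:charact-SIG1}, a citation slip that does not affect your argument.
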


\begin{proof}
Since $G(s,p) \not = G_{min}(s,p)$, by Property \ref{edges} it can be concluded that
$$\frac{s(s-1)}{2} +2ps < |E(G(s,p) )| \leq  \frac{s(s-1)}{2} +2ps+ p(p-1). $$

Moreover, by definition of $SI$-core graphs, $G$ has $2p$ simplicial vertices, $p$ of them adjacent to each $mvs$.
As simplicial vertices can be true twins or false twins, we can conclude that the edges in $E(H)$ connect true twins corresponding to simplicial vertices adjacent to one minimal vertex separator of $G(s,p)$. 
Then, it exist $H = H_1 \cup H_2$ where  $H_j = \cup_{l=p^j_1,...p^j_{r_j}} K_l$, and $p^j_1, .... p^j_{r_j}$  is a partition of $p$ for $j=1, 2$, such that $G(s,p) = G_{min}(H_1, H_2)$. 
\end{proof}

\begin{theorem}\label{theo:principalplainnotmin}
Given $s,p \geq 2$, let $G(s,p)  \not = G_{min}(s,p)$ be a $SI$-core graph in $\mathcal{G}(s,p)$. 
Then the number of non integer Laplacian eigenvalues of $G(s,p)$, $\ell$, is
$ \ell = 0$ or $ \ell = 2$.
\end{theorem}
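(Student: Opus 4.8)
The plan is to reduce the entire question to the already-analyzed graph $G_{min}(s,p)$ by means of the cluster-expansion result of \cite{CR17} recalled above, so that the two possibly irrational eigenvalues of $G_{min}(s,p)$ become the only conceivable source of non-integrality. First I would invoke Lemma \ref{graphH} to write $G(s,p) = G_{min}(s,p)(H_1, H_2)$, where $H_1$ and $H_2$ are the graphs added on the two sets of $p$ simplicial vertices, each a disjoint union of complete graphs indexed by a partition of $p$. The two clusters are $(F_j, S_j)$ for $j=1,2$, with $F_j$ the $p$ false-twin simplicial vertices attached to the $j$-th minimal vertex separator and $S_j$ that separator, so that $s_j = |S_j| = s$ and $c_j = |F_j| = p$.

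Next I apply Corollary 4.3 of \cite{CR17}, which gives the factorization
$$ \det(\lambda I - L(G_{min}(s,p)(H_1,H_2))) = p_L(\lambda)\,\prod_{j=1}^{2}\prod_{i=1}^{p-1}\bigl(\lambda - (s + \mu_i(H_j))\bigr), $$
where $p_L(\lambda)$ depends only on $G_{min}(s,p)$ and not on $H_1, H_2$. The eigenvalues contributed by the double product are the numbers $s + \mu_i(H_j)$; since each $H_j$ is a disjoint union of complete graphs, its Laplacian spectrum is integral (the spectrum of $K_l$ being $l$ with multiplicity $l-1$ together with $0$), and hence every $s + \mu_i(H_j)$ is an integer. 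Consequently, all non-integrality of $G(s,p)$ is confined to the roots of $p_L(\lambda)$.

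Finally I would pin down $p_L(\lambda)$ by specializing to the edgeless choice on $p$ vertices for both $H_1$ and $H_2$, which reproduces $G_{min}(s,p)$ itself. For this choice every $\mu_i(H_j) = 0$, so the product contributes the eigenvalue $s$ with multiplicity $2(p-1)$, and by Corollary \ref{cor:valuesGmin} the remaining $2s+2$ eigenvalues, namely $2s+p$ with multiplicity $2(s-1)$, together with $s+p$, $0$, and the pair $\tfrac{3s+p \pm \sqrt{(s+p)^2 + 4sp}}{2}$, are precisely the roots of $p_L(\lambda)$. The only roots of $p_L$ that can fail to be integers are this last pair, so $p_L$ has $0$ or $2$ non-integer roots by Theorem \ref{theo:principal}. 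Since $p_L$ is unchanged in passing from $G_{min}(s,p)$ to $G(s,p)$ while the extra eigenvalues $s + \mu_i(H_j)$ are integral, $G(s,p)$ has exactly the same non-integer eigenvalues as $G_{min}(s,p)$, whence $\ell = 0$ or $\ell = 2$.

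The delicate point is bookkeeping rather than computation: I must be certain that the multiplicity-$2(p-1)$ eigenvalue $s$ of $G_{min}(s,p)$ is exactly the part absorbed into the $s + \mu_i$ product and is \emph{not} a root of $p_L$, so that $p_L$ carries precisely the potentially irrational pair with no stray copy of $s$ altering its multiplicity. This is secured by the degree count $\deg p_L = n - \sum_j c_j + k = (2s+2p) - 2p + 2 = 2s+2$, which matches the four listed values together with the $2(s-1)$ copies of $2s+p$; verifying this count, and hence the correct identification of the cluster sizes $s_j = s$ and $c_j = p$, is the one step where the argument could break down.
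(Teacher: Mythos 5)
Your proposal is correct and follows essentially the same route as the paper's proof: decompose $G(s,p) = G_{min}(s,p)(H_1,H_2)$ via Lemma \ref{graphH}, apply Corollary 4.3 of \cite{CR17} to see that the $2(p-1)$ replaced eigenvalues $s+\mu_i(H_j)$ are integers (each $H_j$ being a disjoint union of complete graphs), and inherit $\ell \in \{0,2\}$ from Theorem \ref{theo:principal}. Your explicit degree count $\deg p_L = 2s+2$ and the check that no stray copy of $s$ survives in $p_L$ make precise a step the paper only asserts ("the number of non integer Laplacian eigenvalues of $G(s,p)$ is the same of that for $G_{min}(s,p)$"), so your write-up is, if anything, slightly more careful than the original.
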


\begin{proof}
Consider $G_{min}(s,p)$ in $\mathcal{G}(s,p)$, $s \geq 2$ and $p\geq 2$. 
Let $H$ be the graph in the hypothesis of Lemma \ref{graphH}. By using the notation in \cite{CR17}, $G_{min}(s,p)$ has 2 clusters $(F_1, S_1)$ and $(F_2, S_2)$  where $|F_1| = |F_2| = p$ and $|S_1| = |S_2| = s$ are the number of false twins and the number of shared neighbors ($mvs$s), respectively. 
Furthermore, by Lemma \ref{graphH}, $G_{min}(s,p)(H_1, H_2)$ with $H_1 \cup H_2 = H$, 
and each $H_j$ is the union of the complete graphs $K_l$, $l = p^j_1, .... p^j_{r_j}$ containing simplicial vertices adjacent to the minimal vertex separator $S_j$, $ j= 1, 2$.
By using Corollary 4.3 in \cite{CR17} for $G_{min}(s,p)(H_1, H_2)$, its Laplacian eigenvalues coincide with all but $|F_1| + |F_2| - 2 = 2(p-1)$ eigenvalues of $G_{min}(s,p)$. 
 Among the  new  eigenvalues,  when $p^j_i \geq 2$ , $(s_j + p^j_i)$ has multiplicity $(p^j_i -1)$;  
the remaining eigenvalues are equal to $s_j$.
Then, the number of non integer Laplacian eigenvalues of $G(s,p)$ is the same of that for $G_{min}(s,p)$ and  the result follows.  
\end{proof}

\begin{corollary}\label{cor:valuesGmax}
Given $s,p \geq 2$, let $G_{max}(s,p)$ be the $max$ $SI$-core graph in $\mathcal{G}(s,p)$. The Laplacian spectrum of $G_{max}(s,p)$ is
\[
\begin{aligned}
\bigg[ \frac{3s+p  +\sqrt{(s+p)^2 + 4sp}}{2}; ( 2s+p)^{2(s-1)}; s+p; (s+p)^{2(p-1)}; \\
\frac{3s+p - \sqrt{(s+p)^2 + 4sp}}{2}; 0 \bigg].
\end{aligned}
\]

\end{corollary}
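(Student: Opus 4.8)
The plan is to realize $G_{max}(s,p)$ as a member of the family $G_{min}(s,p)(H_1,H_2)$ already analyzed via Corollary 4.3 of \cite{CR17}, and then simply read off how the ``movable'' eigenvalues of $G_{min}(s,p)$ are replaced. The whole argument should be a short specialization of Theorem \ref{theo:principalplainnotmin} and Corollary \ref{cor:valuesGmin}.

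First I would identify the graphs $H_1,H_2$. Since $G_{max}(s,p)$ has the maximum possible number of edges in $\mathcal{G}(s,p)$, by Property \ref{edges} the extra $p(p-1)$ edges compared with $G_{min}(s,p)$ must turn each set of $p$ simplicial vertices adjacent to a single $mvs$ into a clique. In the notation of Lemma \ref{graphH} this forces the trivial partition $p=p$ in each cluster, so that $H_1=H_2=K_p$ and $G_{max}(s,p)=G_{min}(s,p)(K_p,K_p)$.

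Next I would invoke the spectral decomposition recalled from \cite{CR17}. For the two clusters $(F_1,S_1)$ and $(F_2,S_2)$ with $|F_j|=p$ and $s_j=s$, all but $|F_1|+|F_2|-2=2(p-1)$ Laplacian eigenvalues of $G_{max}(s,p)$ coincide with those of $G_{min}(s,p)$; these common eigenvalues are exactly the roots of $p_L(\lambda)$, namely the $2s+2$ values listed in Corollary \ref{cor:valuesGmin} other than the block $s^{2(p-1)}$. The $2(p-1)$ remaining eigenvalues are the values $s_j+\mu_i(K_p)$ for $i=1,\dots,p-1$ and $j=1,2$.

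Finally I would substitute the Laplacian spectrum of $K_p$, which is $p$ with multiplicity $p-1$ together with $0$. Hence $s_j+\mu_i(K_p)=s+p$ for every such $i,j$, contributing $s+p$ with total multiplicity $2(p-1)$. Assembling these with the unchanged eigenvalues yields precisely the claimed spectrum, in which the block $s^{2(p-1)}$ of $G_{min}(s,p)$ is replaced by $(s+p)^{2(p-1)}$. I do not expect a genuine obstacle here: the only points needing care are verifying that the maximal-edge condition forces $H_j=K_p$ rather than some other graph on $p$ vertices, and correctly recalling the Laplacian spectrum of $K_p$; everything else is a direct specialization of the results already established.
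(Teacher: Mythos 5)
Your proposal is correct and follows essentially the same route as the paper's own proof: both realize $G_{max}(s,p)$ as $G_{min}(s,p)(H_1,H_2)$ with $H_1=H_2=K_p$, apply Theorem \ref{theo:principalplainnotmin} (via Corollary 4.3 of \cite{CR17}) to see that only $2(p-1)$ eigenvalues change, and conclude from Corollary \ref{cor:valuesGmin} that the block $s^{2(p-1)}$ is replaced by $(s+p)^{2(p-1)}$. Your version is in fact slightly more careful than the paper's, since you explicitly justify via Property \ref{edges} that maximality forces $H_j=K_p$ and you spell out the Laplacian spectrum of $K_p$, steps the paper leaves implicit.
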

\begin{proof}
Since $G_{max}(s,p)$ verifies the hypothesis of Theorem \ref{theo:principalplainnotmin}, 
and $G_{max}(s,p) = G_{min}(s,p)(H_1, H_2)$ where $H_1 = H_2 = K_p$ as defined below, the only Laplacian eigenvalues of $G_{max}(s,p)$ diferent from those of $G_{min}(s,p)$ are $(s+p)^{2(p-1)}$. 
By Corollary \ref{cor:valuesGmin}, the result follows.
\end{proof}

\begin{corollary}\label{cor:valuesG}
Given $s,p \geq 2$, let $G(s,p) \not = G_{min}(s,p)$ and $\not = G_{max}$ be a $SI$-core graph in $\mathcal{G}(s,p)$ such that $G(s,p) = G_{min}(s,p)(H_1, H_2)$ where $H = H_1 \cup H_2$.  
The Laplacian spectrum of $G(s,p)$  is
\[
\begin{aligned}
\bigg[ \frac{3s+p  +\sqrt{(s+p)^2 + 4sp}}{2}; ( 2s+p)^{2(s-1)}; s+p; (s+\mu_{1}(L(H));\\ 
...; (s+\mu_{2(p-1)}(L(H))); \frac{3s+p - \sqrt{(s+p)^2 + 4sp}}{2}; 0 \bigg],
\end{aligned}
\]
where 
$\mu_{i}(L(H))$, $1 \leq i \leq 2(p-1)$ are the $2(p-1)$ eigenvalues with the highest values.
\end{corollary}
\begin{proof}
By the proof of Theorem \ref{theo:principalplainnotmin}, the Laplacian eigenvalues of $G(s,p)$ are $2s+2$ Laplacian eigenvalues of $G_{min}(s,p)$ $\bigg(\frac{3s+p  +\sqrt{(s+p)^2 + 4sp}}{2}; ( 2s+p)^{2(s-1)}; s+p;\frac{3s+p - \sqrt{(s+p)^2 + 4sp}}{2}; 0 \bigg)$ and $2(p-1)$ Laplacian eigenvalues obtained from $H$ $(s+\mu_{1}(L(H));...; (s+\mu_{2(p-1)}(L(H)))$) . Then, the Corollary follows. 
\end{proof}

\bigskip

For instance, for the $SI$-core graph in Figure~\ref{fig:plain}, $G(2,3) = G_{min}(2,3)(H_1, H_2)$ with $H_1 = K_3$ and $H_2 = K_2 \cup K_1$.   $SpecL(H_1) = [3; 3; 0]$, $SpecL(H_2) = [2; 0; 0]$, $SpecL(G_{min}(2,3)) = [8; 7^2; 5; 2^4; 1; 0]$ and $SpecL(G(2,3)) = [8; 7^2; 5^3; 4; 2; 1; 0]$.

\bigskip

The next property follows from Corollaries \ref{cor:valuesGmin},  \ref{cor:valuesGmax} and \ref{cor:valuesG}

\begin{property}\label{prop:sameeigen}
Let $s,p \geq 2$ be fixed integer values.
\begin{itemize}
\item The Laplacian eigenvalues $\mu_1, \mu_2,\dots, \mu_{2s-1}, \mu_{2s},  \mu_{n-1},\mu_n$
are the same for any graph $G\in \mathcal{G}(s,p)$.
\item For $2s+1 \leq i \leq n-2$ the Laplacian eigenvalues $\mu_i$ are integers and they 
belong to the interval $[s,s+p]$ for any graph $G\in \mathcal{G}(s,p)$.
\end{itemize}
\end{property}

\section{Answering some questions}

In this section we answer some further questions that may arise.
\bigskip

\noindent \underline{Question 1}: Given $n$, is it possible to build a $SI$-core graph with $n$ vertices?

Firstly $n$ must be even;
with any pair $s \geq 2$ and $p \geq 2$ such that $2s+2p=n$
it is possible to build $G \in {\mathcal G}(s,p)$. 
The $2s$ vertices form the maximal clique of $mvs$ $S_1$ and $S_2$;
$p$ vertices are adjacent to $S_1$ and $p$ vertices are adjacent to $S_2$.
These vertices can form mutually exclusive cliques. 

\bigskip

\noindent\underline{Question 2}: Are there Laplacian spectra of $SI$-core graphs such that exactly one $SI$-core graph attain
it?

The answer is yes. Given $s, p \geq 2$, the following spectra  

\[
\begin{aligned}
\bigg[  \frac{3s+p  +\sqrt{(s+p)^2 + 4sp}}{2};( 2s+p)^{2(s-1)};(s+p); s^{2(p-1)};\\
\frac{3s+p - \sqrt{(s+p)^2 + 4sp}}{2};0\bigg]
\end{aligned}
\]
and 
\[
\begin{aligned}
\bigg[ \frac{3s+p  +\sqrt{(s+p)^2 + 4sp}}{2}; ( 2s+p)^{2(s-1)}; s+p; (s+p)^{2(p-1)}; \\
\frac{3s+p - \sqrt{(s+p)^2 + 4sp}}{2}; 0 \bigg]
\end{aligned}
\]
are only attained by $G_{min}(s,p)$ and $G_{max}(s,p)$, respectively.

\bigskip

\noindent\underline{Question 3}: Are there $SI$-core graphs determined by their Laplacian spectrum?

The answer of Question 2 already shows that $G_{min}(s,p)$ and $G_{max}(s,p)$
can be determined by their Laplacian spectrum.
However the same cannot be said for all $SI$-core graphs.
For instance, graphs $G_1$ and $G_2$ in Figure \ref{fig:exampleQ2}
have the same spectrum, $$[10.21699, 9^2,7,5^4,2^4,0.78301, 0],$$ 
however they are not isomorphic.

\begin{figure}[h]
\begin{center}
\begin{tikzpicture}
  [scale=.43,auto=left]
\tikzstyle{every node}=[circle, draw, fill=black,
                         inner sep=1.5pt, minimum width=4pt]
\node (a) at (6,-3){};
\node (b) at (6,0){};
\node (f) at (9,-3){};
\node (e) at (9,0){};
\node (g) at (3,0){};
\node (g') at (4,1.5){};
\node (h) at (2,-1.5){};
\node (i) at (3,-3){};
\node (i') at (4,-4.5){};
\node (m) at (12,0){};
\node (m') at (11,1.5){};
\node (n) at (13,-1.5){};
\node (o) at (12,-3){};
\node (o') at (11,-4.5){};
\foreach \from/\to in {a/b, a/e, a/f, b/e, b/f, e/f,  
g/a,g/b,h/a,h/b,i/a,i/b,m/e,m/f,n/e,n/f,o/e,o/f,
g'/b,g'/a,i'/a,i'/b,m'/e,m'/f,o'/e,o'/f,
g'/g,m'/m}  
\draw (\from) -- (\to);
  \node  at (7,-5.5) [draw=none,fill=none] {$G_2$}; 
\node (a) at (-9,-3){};
\node (b) at (-9,0){};
\node (f) at (-6,-3){};
\node (e) at (-6,0){};
\node (g) at (-12,0){};
\node (g') at (-11,1.5){};
\node (h) at (-13,-1.5){};
\node (i) at (-12,-3){};
\node (i') at (-11,-4.5){};
\node (m) at (-3,0){};
\node (m') at (-4,1.5){};
\node (n) at (-2,-1.5){};
\node (o) at (-3,-3){};
\node (o') at (-4,-4.5){};
\foreach \from/\to in {a/b, a/e, a/f, b/e, b/f, e/f,  
g/a,g/b,h/a,h/b,i/a,i/b,m/e,m/f,n/e,n/f,o/e,o/f,
g'/b,g'/a,i'/a,i'/b,m'/e,m'/f,o'/e,o'/f,
g'/g,h/i}  
\draw (\from) -- (\to);
  \node  at (-8,-5.5) [draw=none,fill=none] {$G_1$}; 

\end{tikzpicture}
\caption{Two $SI$-core graphs in $\mathcal{G}(2,5)$}
\label{fig:exampleQ2}
\end{center}
\end{figure}
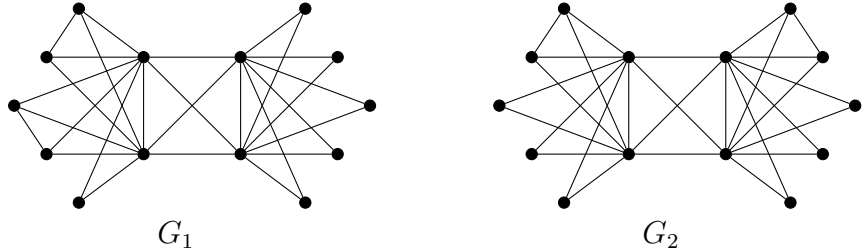

\bigskip

\noindent \underline{Question 4}: Given  $a \in N$,  fixed $s$ and $p$,  is there a $SI$-core graph $G(s,p)$  whose Laplacian spectrum contains $a$?

Since $s$ and $p$ are already known, several eigenvalues of the graph have their value determined in advance.
However, there are $2(p-1)$ values which can be modified.
For these values, we know that same restrictions must be applied.

Firstly $s \leq a \leq s+p$. 

 If $a = s$, the solution is immediate.
Actually, the value $a=s$ can appear $2(p-1)$ times in the spectra of $G$.
Observe $G_{min}(s,p)$.
Its spectrum is already known:

\[
\bigg[  \frac{3s+p  +\sqrt{(s+p)^2 + 4sp}}{2};( 2s+p)^{2(s-1)};(s+p); s^{2(p-1)};\\
\frac{3s+p - \sqrt{(s+p)^2 + 4sp}}{2};0\bigg].
\]

Let us consider $a \neq s$, that is, $a > s$.

The term $s^{2(p-1)}$ corresponds to:
$s^{p-1}, s^{p-1}$.
These eigenvalues correspond to simplicial vertices.
To include the value $a$, each eigenvalue must correspond to an element of a  clique of simplicial vertices of size $a-s$.
So, one of these terms can be rewritten as $a^{a-s-1}, s^{p-a+s+1}$.
For instance, let $a=7$ be the value that  we want to appear as an eigenvalue of $G(3,10)$.
As $7$ satisfies the condition $3 \leq 7\leq 13$,
we know that $7$ must appear exactly  $3$ times  as an eigenvalue.


\section{Acknowledgment}The second author was partially supported by CNPq, Grant 405552/2023-8 and FAPERJ, Process SEI 260003/001228/2023.


\end{document}